\documentclass[11pt]{article}
\usepackage[margin=1in]{geometry}
\usepackage{mathtools,amsfonts,amsthm,amssymb,bbm}
\usepackage{enumitem}
\usepackage[backref,colorlinks,citecolor=blue,linkcolor=magenta,bookmarks=true]{hyperref}
\usepackage{algorithm}
\usepackage{algorithmic}
\usepackage{natbib}
\usepackage[dvipsnames]{xcolor}
\usepackage{authblk}

\title{Pluralistic Leaderboards}
\author[1]{Nika Haghtalab}
\author[2]{Ariel D. Procaccia}
\author[3]{Han Shao}
\author[2]{Serena Lutong Wang}
\author[1]{Kunhe Yang\thanks{Authors are listed alphabetically.
  Emails: \texttt{nika@berkeley.edu},
  \texttt{arielpro@seas.harvard.edu},
  \texttt{hanshao@umd.edu},
  \texttt{serenalwang@g.harvard.edu},
  \texttt{kunheyang@berkeley.edu}.}}
\affil[1]{University of California, Berkeley}
\affil[2]{Harvard University}
\affil[3]{University of Maryland, College Park}
\date{}

\theoremstyle{plain}
\newtheorem{theorem}{Theorem}[section]
\newtheorem{proposition}[theorem]{Proposition}
\newtheorem{lemma}[theorem]{Lemma}

\theoremstyle{definition}
\newtheorem{definition}[theorem]{Definition}

\usepackage{wrapfig}
\usepackage{bbm}

\usepackage{multirow}
\usepackage{hhline}
\usepackage{diagbox}

\usepackage{graphicx}

\usepackage{amsmath,amsthm,amsfonts,amssymb}
\usepackage{mathtools,thmtools,dsfont}
\usepackage{thm-restate}
\usepackage{microtype}
\usepackage{hyperref}
\usepackage[capitalize]{cleveref}

\usepackage{booktabs}
\usepackage{subcaption}
\usepackage{tikz}
\usetikzlibrary{arrows,shapes}

\newtheorem*{lemma*}{Lemma}

\DeclareMathOperator*{\argmin}{argmin}
\DeclareMathOperator*{\argmax}{argmax}
\newcommand{\emdash}{\,---\,}

\newcommand{\support}{\mathrm{supp}}

\newcommand{\indicator}[1]{\mathbbm{1}\left[#1\right]}

\newcommand{\eps}{\varepsilon}

\newcommand{\poly}{\mathrm{poly}}

\newcommand{\sym}{\mathsf{Sym}}

\newcommand{\stepa}[1]{\overset{\rm (a)}{#1}}
\newcommand{\stepb}[1]{\overset{\rm (b)}{#1}}
\newcommand{\stepc}[1]{\overset{\rm (c)}{#1}}

\newcommand{\BT}{\mathsf{BT}}

\newcommand{\candidateset}{\mathcal{C}}
\newcommand{\cD}{\mathcal{D}}
\newcommand{\calD}{\mathcal{D}}

\newcommand{\simiid}{\stackrel{\text{i.i.d.}}{\sim}}

\newcommand{\plu}{\mathsf{plu}}

\newcommand{\checkpoint}{A}

\newcommand{\Rank}{\mathsf{Rank}}
\newcommand{\mechanism}{\mathcal{M}}
\newcommand{\mechanismBT}{\mechanism_{\BT}}
\newcommand{\hatrank}{\widehat{\Rank}}

\newcommand{\hatU}{\widehat{U}}
\newcommand{\Uns}{\mathsf{Uns}}
\newcommand{\hatUns}{\widehat{\Uns}}
\theoremstyle{plain}

\crefname{claim}{Claim}{Claims}

\crefname{conjecture}{Conjecture}{Conjectures}

\crefname{corollary}{Corollary}{Corollaries}

\crefname{remark}{Remark}{Remarks}
\theoremstyle{definition}

\crefname{example}{Example}{Examples}
\newtheorem{fact}[theorem]{Fact}
\crefname{fact}{Fact}{Facts}

\crefname{observation}{Observation}{Observations}

\usepackage[suppress]{color-edits}
\addauthor{ky}{green!60!black}
\addauthor{nh}{magenta}
\addauthor{ap}{orange}
\addauthor{hs}{blue}
\addauthor{sw}{olive}

\begin{document}

\maketitle

\begin{abstract}
    Recent leaderboard-based evaluations of large language models aggregate user feedback by fitting a Bradley--Terry model to pairwise comparisons, producing a single global ranking based on a latent quality score. While appealing for its simplicity, this approach is incompatible with heterogeneous preferences: when LLMs are used across diverse tasks and use cases, users who favor fundamentally different model behaviors can be systematically misrepresented when collapsed into a single quality score. To address this issue, we study \emph{pluralistic leaderboards} that aim to remain \emph{stable} with respect to heterogeneous user populations. Drawing on ideas from social choice theory, we adapt the notion of \emph{local stability}, which requires that no model outside the top-$k$ positions is collectively preferred to the top-$k$ set by more than $O(1/k)$ fraction of users. Building on techniques from the social choice literature, we design an alternative leaderboard mechanism that satisfies local stability while eliciting only $\widetilde{O}(k)$ pairwise comparisons per user, where $k$ is the size of the prefix for which stability is guaranteed. Using data from LMArena, we show that standard Bradley--Terry aggregation can violate local stability in practice, whereas our method provides substantially stronger stability guarantees.
  \end{abstract}

\section{Introduction}
Until recently, leading approaches to the evaluation of large language models (LLMs) relied on benchmarks, which measure performance on multiple tasks and aggregate the corresponding scores~\citep{LBLT+22,JYWY+24}. While this paradigm has provided a standardized and reproducible basis for comparing models, it is increasingly strained by rapid model iteration, benchmark saturation, and sensitivity to task selection; it also struggles to capture qualitative dimensions of model behavior such as helpfulness. 

Some of these shortcomings are addressed by an alternative evaluation approach championed by LMArena~\citep{CZSA+24}. Upon submitting a prompt, an LMArena user is shown a pair of responses from two different LLMs and is asked to indicate which is preferred.
LMArena aggregates these votes by fitting a Bradley–Terry (BT) model. That is, it posits that each LLM has a single latent ``quality'' score, where the probability that one model beats another on a prompt depends only on the difference between their qualities, independently of the identity of the user. 

By computing the maximum likelihood estimator for the quality of each LLM based on the input comparisons, the platform is able to produce a ranking over the models. 

The approach taken by LMArena is compelling and hugely successful,\footnote{As of January 2026, LMArena is valued at \$1.7B just months after it launched as a company.} but it has also attracted notable controversy and criticism~\citep{SNWZ+25}. In our view, its main conceptual flaw is its reliance on the BT model, and with it, the implicit assumption that all pairwise comparisons originate from a generic (or ``average'') person\emdash 

a view that is fundamentally at odds with heterogeneous preferences. People come to LLMs with different goals (e.g., safety vs.\ creativity, conciseness vs.\ thoroughness, formality vs.\ warmth) that represent inherently different latent utility models. 
These differences can be large enough that there may be no single ground-truth scalar ``quality'' that explains society's preferences. Statistically speaking, 
a single BT model is a misspecified model of the exhibited comparisons, and as such, the maximum-likelihood BT scores need not represent what people want in any meaningful way~\citep{GHMP+24,golz2025distortion, shirali2025direct}.

When fundamental disagreements arising from heterogeneous preferences are unavoidable, social choice theory~\citep{BCEL+16} provides a principled framework for aggregation.
To see how these issues manifest in LMArena, and how social choice provides solutions, consider a setting with six competing LLMs $\{a,b,c,d,e,f\}$, and two preference profiles shown below.
\setlength{\arraycolsep}{0.8pt}
\[
\begin{array}{c@{\quad}c}
\footnotesize

\begin{array}{|cc|}
\hline
60\% & 40\% \\
\hline
a & f \\
b & e \\
c & d \\
d & c \\
e & b \\
f & a \\
\hline
\end{array}
&
\footnotesize
\begin{array}{|cccccccccc|}
\hline
10\%&10\%&10\%&10\%&10\%&10\%&10\%&10\%&10\%&10\% \\
\hline
a & c & c & a & b & a & a & b & b & a \\
b & b & d & e & e & d & d & c & c & c \\
c & d & a & b & d & e & e & e & e & d \\
f & f & f & f & f & f & f & f & f & f \\
d & e & b & d & a & c & b & a & d & e \\
e & a & e & c & c & b & c & d & a & b \\
\hline
\end{array}
\\[4em]

\footnotesize\emph{Profile 1} & \footnotesize\emph{Profile 2}

\end{array}
\]

The two profiles induce the same pairwise comparisons and are therefore indistinguishable from the perspective of the BT model, which outputs the ranking $a\succ b\succ c\succ d\succ e\succ f$ in both cases. Now suppose (for simplicity) that a user tests the leaderboard's top three models to find one that is to their liking. For Profile 2, the set $\{a,b,c\}$ works well, as the favorite model of each and every individual is included. By contrast, the same outcome is far less satisfactory for Profile 1. Here, 40\% of the population is only exposed to models that they dislike; an outcome that elevates $f$ into the top tier would ensure that this sizable faction is not shut out. 

Social choice theory provides a formal understanding of the principle that a large and cohesive faction should be fairly represented in the outcome. There is a significant literature on representation in committee elections\emdash building on the work of~\citet{ABCE+17}\emdash which has led to deployed participatory budgeting methods~\citep{PPS21} and AI-driven applications~\citep{FGPP+24}. As we discuss below, however, it is nontrivial to adapt these techniques to LLM leaderboards, where the input consists of (typically sparse) pairwise comparisons. Our research problem, therefore, is this:
\vspace{-8pt}
\begin{quote}
Design \emph{pluralistic leaderboards} that take pairwise comparisons as input and produce an outcome that fairly represents heterogeneous preferences.
\end{quote}
\vspace{-8pt}
\medskip
\noindent\textbf{Our approach and results.}

To construct pluralistic leaderboards, we need a formal notion for \emph{fair representation}; that is, any large cohesive subgroups of users should have sufficient impact on the leaderboard to ensure that the models they collectively favor are ranked near the top.
To this end, we adopt the notion of \emph{local stability}~\citep{aziz2017condorcet} from the social choice literature on committee elections. Intuitively, local stability requires that no sufficiently large subgroup of users can ``profitably deviate'' to an alternative candidate that is excluded from the selected committee. More formally, a committee $W_k$ of size $k$ is \emph{locally stable} if for any candidate $a\notin W_k$, the fraction of users who prefer $a$ to all candidates in $W_k$ is at most $1/k$, and is $\gamma$-approximately local stable if the fraction is at most $\gamma/k$.

Since our ultimate goal is to produce \emph{rankings} rather than committees, we extend local stability to rankings by requiring the property to hold for every prefix of the ranking. 

Specifically, we treat the top-$k$ prefix of a ranking as a size $k$ committee, and impose local stability at every cutoff $k$. Our goal is therefore to design a ranking mechanism that converts pairwise comparisons into such an (approximately) locally stable ranking. 

We begin with studying the committee selection setting, where we build on algorithms by \citet{jiang2020approximately}. The challenge is that these algorithms assume access to the complete rankings of all the users. We adapt these techniques to the leaderboard setting where users are drawn from an underlying distribution, and they each contribute only a small amount of feedback, rather than providing full rankings. By carefully controlling the errors arising from finite-sample estimation, we develop an algorithm that produces
approximately locally stable committees, with a constant approximation factor.

Our algorithm uses $\widetilde{O}(\poly(m))$ sampled users and $\widetilde{O}(k)$ pairwise comparison queries per user, where $k$ denotes the committee size. We then develop a reduction that uses this algorithm as a subroutine to construct approximately stable rankings, incurring only a constant multiplicative factor in the approximation ratio.

Finally, we empirically validate our approach on synthetic and semi-synthetic experiments derived from LMArena data. Our experiments demonstrate that the currently-used Bradley--Terry ranking can violate stability, whereas our methods achieve stability to a degree that is even stronger than our theoretical bounds in practice.

  \vspace{-10pt}
\paragraph{Related Works.}
While a global leaderboard aggregates all battles into a single score, recent work has attempted to incorporate heterogeneity by constructing task-specific or more fine-grained leaderboards. For example, \citet{frick2025prompt} propose Prompt-to-Leaderboard, which conditions Bradley--Terry scores on the prompt to produce prompt-dependent rankings from LMArena data. We take a different approach: rather than producing multiple task- or prompt-specific rankings, we aim to construct a single global ranking that automatically represents all sufficiently large user subgroups, whether defined explicitly or implicitly.

Our approach builds on the social choice literature on local stability in committee selection. In particular, \citet{jiang2020approximately} propose algorithms for computing approximately locally stable committees, and subsequent work improves the approximation factors~\citep{charikar2025six,nguyen2025few}. These results assume access to richer preference information (users' full rankings) than is typically available in leaderboard settings. We extend this line of work by adapting local stability algorithms to sparse pairwise comparisons and by lifting committee-level guarantees to full rankings.

Finally, our work is related to research on ranking and committee selection under incomplete or partially elicited preferences~\citep{HKPT+23,halpern2024computing,springham2025multiwinner}. In particular, \citet{HKPT+23} and \citet{springham2025multiwinner} study representation under approval-based preferences, whereas our framework assumes ranked preferences revealed through pairwise comparisons.

  \section{Model and Preliminaries}

We use  $\candidateset=\{1,2,\ldots,m\}$ to denote the set of $m$ models competing on the leaderboard. 
We assume that users are heterogeneous and are drawn from a distribution $\calD$. 

Each user $i$ is associated with a strict preference order over the models in $\candidateset$, represented by a permutation $\sigma_i=(\sigma_i(1),\ldots,\sigma_i(m))\in\sym(\candidateset)$, where $\sigma_i(1)\succ_{i}\sigma_i(2)\succ_{i}\ldots\succ_{i}\sigma_i(m)$. Here, $\sym(\candidateset)$ denotes the set of all permutations of $\candidateset$ and $\calD$ can be equivalently viewed as a distribution over rankings.

\paragraph{Leaderboard Mechanism.}
The goal of the leaderboard is to produce a ranking of models $\pi=(\pi(1),\ldots,\pi(m))\in\sym(\candidateset)$ based on the preferences of the users. 
Users interact with the leaderboard by contributing ``battles'', which are pairwise comparisons between two models. 
Specifically, for each battle, a user $i$ submits a prompt to the leaderboard platform, and the platform selects a pair of models $\{x,y\}\subseteq\candidateset$ and presents their outputs to the user without revealing the identity of the models. The user then indicates which output they prefer.

If user $i$ participates in $d$ battles, we denote the outcomes by a sequence of ordered pairs
$(x_i^t\succ_i y_i^t)_{t=1}^d$.
We make the key simplifying assumption that users are consistent: for any user $i$ and any pair of models $x,y\in\candidateset$, $i$ prefers $x$ over $y$ in a battle if and only if $x$ appears before $y$ in the associated ranking $\sigma_i$.\footnote{In practice, a user's preferences may also depend on contextual factors such as the task or use case. Our model can capture this by treating the user's type as a latent type that aggregates all such factors, so that preferences are fixed conditioned on the user's type.}

A leaderboard's \emph{ranking mechanism} $\mechanism$ specifies both which models are selected for comparison and how the observed battles are aggregated into a ranking. 
Specifically, $\mechanism$ has sampling access to the population $\calD$, and can query each sampled user for their preferences via a sequence of pairwise comparison queries. The choice of which model pairs to compare may be adaptive and depend on the outcomes of previous queries. We measure the complexity of $\mechanism$ by (i) the maximum number of pairwise comparisons elicited per user, denoted with $d$, and (ii) the total number of users sampled from $\calD$, denoted with $n$.

\paragraph{Bradley--Terry Ranking by LM Arena~\citep{CZSA+24}.} We describe the Bradley--Terry-based leaderboard mechanism proposed by~\citet{CZSA+24} for LMArena, which we denote by $\mechanismBT$. This mechanism first uses an active sampling approach for selecting model pairs, then ranks models according to scores obtained by fitting a Bradley--Terry (BT) model to the observed battle outcomes.

In this approach, the selection of sampled pairs is independent of the user's identity: for the $t$-th battle in the data collection process, the leaderboard samples an unordered pair of models $\{x^t,y^t\}\sim P_t$ where $P_t\in\Delta(\candidateset^2)$ is a distribution over model pairs. The distribution $P_t$ is chosen adaptively to prioritize pairs that are expected to reduce the uncertainty in the estimated BT scores, and is shown by \citet{CZSA+24} to converge almost surely to a limiting distribution as $t\to\infty$.

Given a total of $T$ battle outcomes $(x^t\succ y^t)_{t=1}^T$, \citet{CZSA+24} estimate BT scores by solving the following reweighted maximum likelihood estimation problem:
\[
\theta=\argmax_{\theta\in\mathbb{R}^m} \sum_{t=1}^T \frac{1}{P(\{x^t,y^t\})}\log \frac{1}{1+e^{-(\theta_{x^t}-\theta_{y^t})}},
\]
where $\theta_x$ is the BT score of model $x$.
The final leaderboard ranking orders models according to their BT scores $\theta_x$.\footnote{The current mechanism by LMArena extends this by calculating the confidence intervals for each model's estimated BT score, and displaying ties in the ranking when intervals overlap~\citep{lmarenachangelog}.}

\paragraph{Local Stability.} 
In this paper, we consider the local stability of the leaderboard. 
Local stability, proposed by \citet{aziz2017condorcet}, is a property of a committee that ensures that no outside model is collectively preferred by a sufficiently large subgroup of users.
Formally, a committee $W_k\subseteq\candidateset$ of size $k$ is said to be \emph{$\gamma$-approximately locally stable} if 
\[
\max_{a\notin W_k} \Pr_{i\sim\calD}[a \succ_i W_k] \le \frac{\gamma}{k},
\]
where $a\succ_i W_k$ means that user $i$ prefers $a$ to all models in $W_k$.
When $\gamma=1$, the committee is said to be locally stable. We refer to the smallest $\gamma$ that satisfies the above condition as the \emph{stability ratio} of $W_k$.

Since our ultimate goal is to produce a full ranking of models, we extend this notion to rankings by considering the local stability of all \emph{prefix committees}. For a ranking $\pi\in\sym(\candidateset)$, let $W_k^\pi=\{\pi(1),\ldots,\pi(k)\}$ denote the prefix committee consisting of the top $k$ models in the ranking $\pi$. We say that $\pi$ is \emph{$\gamma$-approximately locally stable ranking} for user distribution $\calD$ if, for every $k\le m$, the prefix committee $W_k^\pi$ is $\gamma$-approximately locally stable.

  \section{Committee Selection}
\label{sec:committee}
We consider the committee selection problem as a first step towards constructing stable rankings. Our goal is to select a single committee of a given size that satisfies approximate stability. We will later show in \Cref{sec:ranking} how this procedure can be adapted or used as a subroutine to construct approximately stable rankings.
 
Our committee selection algorithm (\Cref{alg:committee}) is closely inspired by the \emph{Iterated Rounding} Algorithm proposed by \citet{jiang2020approximately}. But in contrast to their setting, which assumes full access to the users' complete rankings, we adapt the approach to the sampling-based model in which the user population is represented by an underlying distribution that can be accessed only through sampling, and each user's preferences can be accessed through only a limited number of pairwise comparisons.

At a high level, the Iterated Rounding approach proceeds in $T=\Tilde{O}(\log k)$ rounds. In each round $t$, it identifies a sub-committee $S_t$ of geometrically decreasing size. This sub-committee is chosen to provide a good representation for a large fraction of the users that have not been well represented by sub-committees selected in previous rounds. Repeating this process guarantees that the algorithm \emph{makes progress} by geometrically shrinking the fraction of poorly represented users, and as a result, the union of all such sub-committees forms a committee of size $k$ with the desired stability guarantees.

\begin{algorithm}[tb]
    \caption{Committee Selection via Iterated Rounding}
    \label{alg:committee}
    \begin{algorithmic}
      \STATE {\bfseries Input:} Committee size $k$; sampling oracle $\cD$, rank estimation oracle $\hatrank$ (\Cref{alg:rank-estimation})
      \STATE {\bfseries Output:} A committee $\hat{S}$ of size $k$.
      \STATE {\bfseries Parameters:} $\alpha\gets\frac12+4\eps$, $\beta\gets\frac14+2\eps$. Number of Rounds $T\gets \lceil 10\log (\tfrac{k}{\eps}) \rceil$, sub-committee sizes $(k_t)_{t\in[T]}$ defined by $k_t\gets \max\{1, \lfloor (1-\alpha)\alpha^{t-1}k \rfloor\}$ while the total committee size is less than $k$, and stop once $\sum_{t} k_t=k$.
      \STATE {\bfseries Initialize} $\hat{S}\gets \emptyset$
      \FOR{$t=1$ {\bfseries to} $T$}
      \STATE $\widehat{\cD}_t^{\beta-\eps}\gets$ sampling oracle for unsatisfied users at round $t$, implemented in \Cref{alg:rejection-sampling} with history $(S_{\tau},\Delta_{\tau})_{\tau\le t-1}$ and threshold $\beta-\eps$;
      \STATE $\Delta_t\gets (1+\eps)$-approximately stable lottery for $\widehat{\cD}_t^{\beta-\eps}$ of size $k_t$ (see \citet{cheng2020group});
      \STATE $S_t\gets$ committee in $\support(\Delta_t)$ that minimizes estimated unsatisfied probability
      $\frac{1}{N} \sum_{i=1}^N \indicator{\hatrank(i; S,\Delta)\le \beta-\eps}$, for $N$ i.i.d. users drawn from $\widehat{\cD}_t^{\beta-\eps}$;

      \STATE $\hat{S}\gets \hat{S}\cup S_t$
      \ENDFOR
      \STATE \textbf{Return} $\hat{S}$;
    \end{algorithmic}
  \end{algorithm}

Specifically, we formalize what we mean by \emph{representation} in the above discussion, which depends on how each sub-committee is selected in \emph{Iterated Rounding}. In each round $t$, the algorithm first computes a $\gamma$-approximately stable lottery $\Delta_t$ \emdash a distribution over size-$k_t$ committees \emdash for the currently unsatisfied user distribution. It then selects a committee $S_t$ from the support of $\Delta_t$.

To quantify representation, we define the \emph{rank} of a committee $S$ for user $i$ under lottery $\Delta$ as \[\Rank(i; S,\Delta):=\Pr_{S'\sim\Delta}[S\succeq_i S'],\] 
where $S\succeq_i S'$ if user $i$'s favorite candidate in $S$ is the same or more preferred than their favorite candidate in $S'$.
Intuitively, a larger rank indicates that $S$ is more representative of $\Delta$ for user $i$. Given a threshold $x\in(0,1)$, we say that user $i$ is \emph{satisfied} at the $t$-th round if $\Rank(i; S_t,\Delta_t)\ge x$. Following this, the users that remain \emph{unsatisfied} at round $t$ are 

\[U_t^x=\left\{i\mid \Rank(i; S_{\tau},\Delta_{\tau})< x,\ \forall\tau< t\right\}.\] Accordingly, we use $\cD_t^x$ to denote the conditional distribution of unsatisfied users, which is $\cD_t^x:=\cD\mid U_t^x$.

Algorithm~\ref{alg:committee} is obtained by adapting the above iterated rounding framework to a sampling-based and incomplete-preference setting. In particular, the estimation of user ranks, selection of sub-committees in each round, and the identification of unsatisfied users must all be carried out using only finite samples and pairwise comparison queries, rather than exact access to users' complete rankings.

In the remainder of this section, we formally establish the guarantees achieved by Algorithm~\ref{alg:committee}, showing how the algorithm maintains approximate local stability despite the errors introduced by finite-sample estimation.

\begin{restatable}[Guarantee of \Cref{alg:committee}]{theorem}{thmcommittee}
\label{thm:committee}
    For any user distribution $\cD$ and committee size $k$, \Cref{alg:committee} returns a committee $\hat{S}$ of size at most $k$, such that with probability at least $1-\delta$, $\hat{S}$ is $(16+O(\eps))$-approximately stable for $\cD$. The algorithm has the following complexity:
    \begin{enumerate}[noitemsep,topsep=0pt]
        \item[(i)] The total number of users sampled is $n=O(\poly(m,1/\eps,\log 1/\delta))$;
        \item[(ii)] The maximum number of pairwise comparisons per user is $d=O(\frac{k}{\eps^2}\log (\frac{m}{\eps\delta}))$.
    \end{enumerate}
    In particular, when $\eps$ is set to a constant, \Cref{alg:committee} achieves a constant-factor approximation using $\widetilde{O}(k)$ pairwise comparisons per user.
\end{restatable}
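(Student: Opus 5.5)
The plan follows the Iterated Rounding analysis of \citet{jiang2020approximately} and carries an additive $O(\eps)$ slack through every step to pay for estimation error. The analysis has three parts: an idealized per-round progress lemma, an estimation layer that makes it hold with sampled users and pairwise queries, and a final union-bound and accounting step.

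\textbf{Step 1 (exact per-round progress).} Fix a round $t$. Suppose $\Delta_t$ is a $\gamma$-approximately stable lottery of size-$k_t$ committees for $\cD_t^{x}$, with $\gamma=1+\eps$; by \citet{cheng2020group} this means that for every $a\notin$ the committee,
\[
\Pr_{i\sim \cD_t^x,\,S\sim\Delta_t}[a\succ_i S]\le \gamma/k_t .
\]
I will prove two facts.

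(a) The mean rank is large. For every $i$, $\mathbb{E}_{S\sim\Delta_t}[\Rank(i;S,\Delta_t)]\ge 1/2$, because for two independent draws $S,S'$, either $S\succeq_i S'$ or $S'\succeq_i S$. So by averaging over $S\in\support(\Delta_t)$, some $S$ has
\[
\Pr_{i\sim\cD_t^x}[\Rank(i;S,\Delta_t)<x]\le \frac{1-1/2}{1-x},
\]
or a sharper Markov-type bound. With $x=\beta$ this is about $2/3$. Tuning the constants via $\alpha,\beta$, the surviving mass satisfies $\Pr[U_{t+1}^x]\le \alpha\Pr[U_t^x]$.

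(b) Satisfied users are protected. Take a user $i$ satisfied in round $\tau$, so $\Rank(i;S_\tau,\Delta_\tau)\ge\beta$, and a candidate $a$ with $a\succ_i \hat S$. Then $a\succ_i S'$ for at least a $\beta$-fraction of $S'\sim\Delta_\tau$. Stability of $\Delta_\tau$ therefore gives
\[
\Pr[\text{such users in round }\tau]\le \frac{\gamma}{\beta k_\tau}\Pr[U_\tau].
\]
Summing over $\tau$, using $k_\tau\approx(1-\alpha)\alpha^{\tau-1}k$ and $\Pr[U_\tau]\le\alpha^{\tau-1}$, each round contributes $\frac{\gamma}{\beta(1-\alpha)k}$. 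Rounds with $k_t$ floored to $1$ and the final leftover mass $\alpha^T\le \eps/k$ (by the choice of $T$) must be handled separately. I expect the geometric bookkeeping to yield $\frac{\gamma}{\beta(1-\alpha)}\cdot(\text{const})\le 16+O(\eps)$ after optimizing with $\alpha=1/2$, $\beta=1/4$. The size bound $\sum_t k_t\le k$ holds by construction.

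\textbf{Step 2 (estimation layer).} This is the main obstacle. $\Rank(i;S,\Delta)$ must be estimated using only $\tilde O(k)$ comparisons. The plan for $\hatrank$ is to sample $O(\eps^{-2}\log(m/\eps\delta))$ committees $S'\sim\Delta$ and, for each, decide $S\succeq_i S'$ by finding $i$'s favorite in $S\cup S'$ through a tournament of $O(k)$ comparisons. Hoeffding then gives additive error $\eps$ per estimate, for $d=O(\frac{k}{\eps^2}\log\frac{m}{\eps\delta})$ in total. For consistency, the same estimate must serve both the rejection sampler and the selection of $S_t$.

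Because users are classified with threshold $\beta-\eps$ on a noisy rank, define the sandwich: every user with true rank $\ge\beta$ is accepted, and every accepted user has true rank $\ge \beta-2\eps$. The analysis of Step 1 then runs with thresholds in $[\beta-2\eps,\beta]$, which is what shifts $\alpha,\beta$ by $O(\eps)$ and produces the $O(\eps)$ loss.

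Two further estimation issues need care. First, the lottery is computed on an empirical sample of $\widehat\cD_t$. Over the finitely many ($\le m^{k_t}$) committees, uniform convergence needs $\poly(m,1/\eps,\log 1/\delta)$ samples, so that approximate stability transfers to the true conditional distribution with an additive $\eps/k$. Second, rejection sampling costs a factor $1/\Pr[U_t]$, but the process can stop once $\Pr[U_t]\lesssim\eps/k$, which keeps the number of samples polynomial.

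\textbf{Step 3 (union bound).} I will union bound over $T$ rounds and all estimation events, each at failure probability $\delta/\poly$. On the resulting good event, the exact analysis applies with perturbed constants, giving $(16+O(\eps))$-stability.
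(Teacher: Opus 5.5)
There is a genuine gap in Step~1, precisely where the constant factor comes from. Part (b) is correct; it is the paper's \Cref{lemma:guarantee-for-satisfied-users}. Part (a), however, uses only $\mathbb{E}_{S\sim\Delta_t}[\Rank(i;S,\Delta_t)]\ge 1/2$. Markov then bounds the surviving unsatisfied fraction by $\frac{1}{2(1-x)}$, which is at least $1/2$ for every threshold $x$ and equals $2/3$ at $x=\beta=1/4$.

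No ``sharper Markov-type bound'' can fix this. A $[0,1]$-valued variable with mean $1/2$ can place mass $\frac{1}{2(1-x)}$ just below $x$. So with the algorithm's $\alpha\approx 1/2$ you cannot get $\Pr[U_{t+1}^x]\le\alpha\Pr[U_t^x]$, and the per-round terms in (b) would actually grow like $(4/3)^{t-1}$.

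Even granting $\Pr[U_\tau]\le\alpha^{\tau-1}$, your own computation shows that every round contributes the same $\frac{\gamma}{\beta(1-\alpha)k}$. Over $T=\Theta(\log(k/\eps))$ rounds the total is therefore $\Theta(\log(k/\eps))\cdot\frac{\gamma}{k}$, so the ``const'' in your final expression is really $T$. A constant factor requires the unsatisfied mass to shrink strictly faster than the sub-committee sizes $k_t\propto\alpha^{t-1}$.

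The missing ingredient is the quantile property from \citet{jiang2020approximately} (their Lemma~1), which the paper uses to prove \Cref{lmm:shrinking}. For each fixed user $i$ and any lottery $\Delta$, $\Pr_{S\sim\Delta}[\Rank(i;S,\Delta)\le\beta]\le\beta$. This is the probability-integral-transform inequality applied to the position in $\sigma_i$ of $i$'s favorite member of $S\sim\Delta$. It says the rank of a random support committee stochastically dominates a uniform variable, which is much stronger than having mean $1/2$.

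Averaging over $i$ gives some $S\in\support(\Delta_t)$ that leaves at most a $\beta$-fraction unsatisfied. With the conservative threshold $\beta-\eps$, the paper obtains $\widehat{\mu}_t\le(\beta+2\eps)^{t-1}$. Now $\beta$ is both the shrink rate and the $1/\beta$ loss in (b), so the sum becomes $\frac{\gamma}{\beta(1-\alpha)k}\sum_t(\beta/\alpha)^{t-1}\le\frac{\gamma}{\beta(1-\alpha)(1-\beta/\alpha)k}$. This is minimized at $\beta=\alpha/2$ and $\alpha=1/2$, giving $16\gamma/k$. (Decoupling $\alpha$ from your Markov rate would still give some constant, but one above $100$, and not for the algorithm's fixed parameters.)

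Your Steps~2 and~3 match the paper's treatment and go through once Step~1 is repaired. These cover rank estimation with $O(\eps^{-2}\log\frac{m}{\eps\delta})$ sampled committees at $O(k)$ comparisons each, the threshold-$(\beta-\eps)$ sandwich, early stopping of rejection sampling, and the union bound.
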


Below we provide a proof sketch for a simpler case where $\cD$ has $\poly(m,1/\eps, \log(k), \log(1/\delta))$. The full proof for the general case is relegated to \Cref{app:main-theorem}.
\begin{proof}[Proof sketch of \Cref{thm:committee}.]
    To show that the final committee $\hat{S}$ is approximately stable, we can decompose the user population into $T$ rounds based on when a user becomes satisfied. Recall that we have used $U_t^x$ to denote the users that remain unsatisfied at round $t$, measured using their true rank. However, since each user only answers a limited number of queries, the quantity $\Rank(i;S,\Delta)$ cannot be computed directly. Instead, we estimate it by querying user $i$ to compare $S$ against a finite number of committees $S_1',\ldots,S_{L}'\simiid\Delta$, and define
    \[
    \hatrank(i; S,\Delta):=\frac{1}{L}\sum_{\ell=1}^L \indicator{S\succeq_i S_\ell'}.
    \]
    Here, each query $\indicator{S\succeq_i S_\ell'}$ can be implemented by quering user $i$ with $|S|+|S'|$ pairwise comparisons.
    By standard concentration arguments, this estimator achieves at most $\eps$ additive error with $\widetilde{O}(1/\eps^2)$ draws from $\Delta$ (see \Cref{lmm:union}).

    With the estimated rank, we use $$\hatU_t^x:=\left\{i\mid \hatrank(i; S_{\tau},\Delta_{\tau})\le x,\ \forall\tau< t\right\}$$ to denote the subset of users that remain unsatisfied at round $t$, measured using their estimated rank.

    Now we decompose the user population into $T+1$ subgroups, where the $t$-th subgroup is the users that become satisfied at round $t$, i.e., $U_t^{x}\setminus U_{t+1}^{x}$ for $t\le T$, and the remaining users are $\hatU_{T+1}^x$ for $t=T+1$. By combining the argument in \citet{jiang2020approximately} with estimation errors, we can show that the newly satisfied users are well-represented by the committee $S_t$, as shown by the following lemma:
    \begin{lemma}[Guarantee for satisfied users]
        \label{lemma:guarantee-for-satisfied-users}
       Let $S_t$ be the sub-committee selected at round $t$, and $x$ be the threshold used to measure satisfaction with the estimated rank.
       We have that for any alternative $a\notin S_t$,

        \[
        \Pr_{i\sim \widehat{\cD}_t^{x}}\left[a\succ_i S_t \ \text{and} \ i\in \widehat{U}_t^{x}\setminus \widehat{U}_{t+1}^{x}\right]\le\frac{1+\eps}{k_t\cdot (x-\eps)}.
        \]
        In other words, users who are satisfied in the $t$-th round (i.e., users in $\widehat{U}_t^{x}\setminus \widehat{U}_{t+1}^{x}$) have bounded probability of deviation.
    \end{lemma}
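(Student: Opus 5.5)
The plan is to reproduce the Jiang–Munagala–Wang argument for Iterated Rounding, using the $(1+\eps)$-approximate stability of the lottery $\Delta_t$ (\citet{cheng2020group}) and paying an additive $\eps$ for rank estimation. Fix $a\notin S_t$.

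\emph{Step 1: a newly satisfied deviator ranks many draws below $a$.} Consider a user $i$ in $\widehat{U}_t^{x}\setminus\widehat{U}_{t+1}^{x}$ with $a\succ_i S_t$. Being in this set means the estimated rank at round $t$ exceeds the threshold, $\hatrank(i;S_t,\Delta_t)>x$. On the good event of \Cref{lmm:union}, where every estimated rank used by the algorithm is within $\eps$ of the true rank, this gives
\[
\Rank(i;S_t,\Delta_t)=\Pr_{S'\sim\Delta_t}[S_t\succeq_i S']\ge x-\eps .
\]
Now $a\succ_i S_t$ says $a$ beats $i$'s favorite in $S_t$. So whenever $S_t\succeq_i S'$, the user also prefers $a$ to every member of $S'$. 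Hence
\[
\Pr_{S'\sim\Delta_t}[a\succ_i S']\ge x-\eps .
\]

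\emph{Step 2: average over users and use lottery stability.} Write $E_i$ for the event in the lemma. By Step 1, $\indicator{E_i}(x-\eps)\le\Pr_{S'\sim\Delta_t}[a\succ_i S']$. Take expectation over $i\sim\widehat{\cD}_t^{x}$ and swap the order of integration (Fubini):
\[
(x-\eps)\Pr_{i}[E_i]\le \mathbb{E}_{S'\sim\Delta_t}\Pr_{i\sim\widehat{\cD}_t^{x}}[a\succ_i S'] .
\]
Since $\Delta_t$ is a $(1+\eps)$-approximately stable lottery of size $k_t$ for $\widehat{\cD}_t^{x}$, the right-hand side is at most $(1+\eps)/k_t$ for every outside candidate $a$.

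One point needs care here. Stability of lotteries is usually stated for $a$ ranging over all candidates, with $a\succ_i S'$ vacuous when $a\in S'$. I will check that the \citet{cheng2020group} guarantee has exactly this form, $\mathbb{E}_{S'}\Pr_i[a\succ_i S']\le\gamma/k_t$ for all $a$. Dividing by $x-\eps$ then gives the claimed bound.

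\emph{Main obstacle: matching distributions and events.} The stability guarantee must be taken with respect to exactly the distribution $\widehat{\cD}_t^{x}$. That distribution is the one the rejection sampler produces, and it is the one $\Delta_t$ was computed for. In the algorithm the threshold passed is $x=\beta-\eps$, so this matches.

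The $\hatrank$ value that defines membership in $\widehat{U}_{t+1}^{x}$ also has to be the same estimate to which the concentration event applies. I will condition on the high-probability event from \Cref{lmm:union}, union-bounded over all users and rounds, so that the inequality $|\hatrank-\Rank|\le\eps$ is available for every relevant user. In the simplified proof-sketch setting, where $\cD$ has polynomial support, this union bound is finite and straightforward.

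With that event in place, the rest is the two-line averaging argument above.
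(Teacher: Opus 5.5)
Your proposal is correct and matches the paper's proof: both use the $\eps$-accuracy event of \Cref{lmm:union} to turn $\hatrank(i;S_t,\Delta_t)>x$ into $\Rank(i;S_t,\Delta_t)>x-\eps$, and then apply the stable-lottery bound $\Pr_i[a\succ_i S \text{ and } \Rank(i;S,\Delta)\ge y]\le \gamma/(k y)$ with $y=x-\eps$ and $\gamma=1+\eps$. The only difference is that the paper cites this bound from \citet{jiang2020approximately} (Claim~1 and Lemma~1), whereas your Steps~1--2 derive it directly by averaging over $S'\sim\Delta_t$.
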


    More importantly, we must ensure that the algorithm continues to make steady progress by shrinking the fraction of unsatisfied users, even when we are using estimated ranks with some additive errors. This crucially depends on our choice of the threshold $x$. We employ a \emph{more conservative threshold} of $x=\beta-\eps$ when updating the distribution of unsatisfied users, as a proxy for the intended threshold $\beta$. This choice is designed to account for estimation error and to ensure that users who are already satisfied are successfully removed from the estimated unsatisfied set, i.e., $\widehat{U}_t^{\beta-\eps}\subseteq U_t^{\beta}$. We show in the following lemma that, with this conservative threshold, the probability mass of estimated unsatisfied users decreases geometrically across rounds.

    \begin{lemma}[Geometrically shrinking unsatisfied users]\label{lmm:shrinking}

        For all $t\ge 1$, when the threshold is set to be $x=\beta-\eps$, the probability mass of estimated unsatisfied users shrink geometrically: with probability at least $1-T \delta$,
        \[
       \widehat{\mu}_t:= \Pr_{i\sim \cD}[i\in \hatU_t^{\beta-\eps}]\le (\beta+2\eps)^{t-1}\quad\text{for all } t\in [T].
        \]
    \end{lemma}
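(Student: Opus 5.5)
The plan is induction on $t$. Throughout, we condition on a good event: for every round, every user sampled and every committee in the relevant support, the estimate $\hatrank$ is within $\eps$ of $\Rank$ (\Cref{lmm:union}), and the empirical unsatisfied fractions used to pick $S_t$ concentrate. The union bound over rounds is where the $1-T\delta$ comes from. The base case $t=1$ is trivial, since $\hat\mu_1 = \Pr[i\in\hatU_1^{\beta-\eps}] = 1 = (\beta+2\eps)^0$. For the inductive step it suffices to show
\[
\Pr_{i\sim\widehat{\cD}_t^{\beta-\eps}}\bigl[\hatrank(i;S_t,\Delta_t)\le \beta-\eps\bigr]\le \beta+2\eps .
\]
This works because $\hat\mu_{t+1} = \hat\mu_t\cdot \Pr_{i\sim \widehat\cD_t^{\beta-\eps}}[i \text{ stays unsatisfied}]$, where $\widehat\cD_t^{\beta-\eps}=\cD\mid \hatU_t^{\beta-\eps}$ is exactly what \Cref{alg:rejection-sampling} simulates.

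The core is an averaging argument over the lottery, following \citet{jiang2020approximately}. For any fixed user $i$, draw $S,S'\simiid \Delta_t$. Then $\Pr[S\succeq_i S'] \ge 1/2$, because the comparison is a total preorder and ties count for both sides. Hence $\mathbb{E}_{S\sim\Delta_t}[\Rank(i;S,\Delta_t)]\ge 1/2$. Since $\Rank\le 1$, Markov's inequality applied to $1-\Rank$ gives
\[
\Pr_{S\sim\Delta_t}[\Rank(i;S,\Delta_t)<\beta] \le \frac{1/2}{1-\beta}.
\]
This bound is too weak as stated, so instead I would use the quantile structure directly. For fixed $i$, the map $S\mapsto \Rank(i;S,\Delta_t)$ is the CDF of $i$'s favorite-in-$S$ position under $\Delta_t$. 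Therefore $\Pr_{S\sim\Delta_t}[\Rank(i;S,\Delta_t)< \beta]\le \beta$, by the standard fact that a CDF evaluated at its own random variable stochastically dominates the uniform distribution. Averaging over $i\sim\widehat\cD_t^{\beta-\eps}$ and swapping the order of expectation, there exists $S$ in $\support(\Delta_t)$ with $\Pr_i[\Rank(i;S,\Delta_t)<\beta]\le\beta$.

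Next, pass to estimated ranks. On the good event, $\hatrank\le\beta-\eps$ implies $\Rank<\beta$ (up to a boundary tie, which can be absorbed into $\eps$). So for that $S$, the estimated-unsatisfied fraction is at most $\beta$. The algorithm chooses $S_t$ to minimize the empirical estimate over $N$ samples, and the support of $\Delta_t$ has at most $\poly(m)$ committees for the lottery of \citet{cheng2020group}. By Hoeffding plus a union bound over the support with $N=\widetilde O(\log(|\support|/\delta)/\eps^2)$, the true estimated-unsatisfied fraction of $S_t$ is at most $\beta+2\eps$: one $\eps$ for the minimizer's estimation error and one for the comparator's. The randomness of $\hatrank$ itself is handled by treating $(i,\hatrank)$ as a joint random draw.

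I expect the main obstacle to be this last step. The sample average is over users drawn from $\widehat\cD_t^{\beta-\eps}$, whose definition depends on previously chosen random $S_\tau,\Delta_\tau$ and on the estimated ranks. Concentration must therefore be applied conditionally on the history, and the estimates for the same user across rounds must stay consistent, meaning the user's $\hatrank$ at earlier rounds defines membership. To deal with this, I would fix the history, argue conditionally, and take a union bound over the $T$ rounds.
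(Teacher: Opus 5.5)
Your proposal is correct and follows essentially the same route as the paper. You use induction via $\widehat{\mu}_{t+1}=\widehat{\mu}_t\cdot\Pr_{i\sim\widehat{\cD}_t^{\beta-\eps}}[\hatrank(i;S_t,\Delta_t)\le\beta-\eps]$, then the existence of a good committee $S^\star\in\support(\Delta_t)$ (the paper cites Lemma 1 of \citet{jiang2020approximately}; your CDF-plus-averaging argument reproves it), then the $\eps$-slack passage from $\hatrank$ to $\Rank$, and an ERM step with uniform concentration over the support costing $2\eps$, finishing with a union bound over rounds. One small tidy-up: the CDF fact already holds non-strictly, $\Pr_{S\sim\Delta_t}[\Rank(i;S,\Delta_t)\le\beta]\le\beta$, which matches the implication $\hatrank\le\beta-\eps\Rightarrow\Rank\le\beta$ exactly, so no boundary tie needs to be absorbed into $\eps$.
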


Putting the two lemmas together, we have the following guarantee:
    \begin{align*}
        &\Pr_{i\sim \cD}[a\succ_i \hat{S}]\\
        \stepa{\le}& 
        \sum_{t=1}^T\Pr_{i\sim \cD}\left[a\succ_i \hat{S}, i\in \widehat{U}_t^{\beta-\eps}\setminus \widehat{U}_{t+1}^{\beta-\eps}\right] +\Pr_{i\sim \cD}\left[\widehat{U}_{T+1}^{\beta-\eps}\right] \\
        \stepb{\le}& \sum_{t=1}^T\Pr_{i\sim \cD}\left[a\succ_i S_t, i\in \widehat{U}_t^{\beta-\eps}\setminus \widehat{U}_{t+1}^{\beta-\eps}\right] + \Pr_{i\sim \cD}\left[\widehat{U}_{T+1}^{\beta-\eps}\right]\\
        ={}& \sum_{t=1}^T
        \widehat{\mu}_t\cdot\Pr_{i\sim \widehat{\cD}_t^{\beta-\eps}}\left[a\succ_i S_t, i\in U_t^{\beta-\eps}\setminus U_{t+1}^{\beta-\eps}\right] + \widehat{\mu}_{T+1}\\
        \stepc{\le}& \sum_{t=1}^T\frac{(\beta+2\eps)^{t-1}\cdot(1+\eps)}{\alpha^{t-1}(1-\alpha)k\cdot(\beta-2\eps)} + (\beta+2\eps)^{T+1}\,.
    \end{align*}
    In the above argument, (a) follows from dividing the user population by the first time they are removed from the estimated unsatisfied set; (b) leverages the fact that $a\succ_i \hat{S}$ implies $a\succ_i S_t$; (c) follows from combining \Cref{lemma:guarantee-for-satisfied-users} (with $x=\beta-\eps$) and \Cref{lmm:shrinking}.

    Finally, we optimize the parameters $\alpha,\beta,\eps$ to minimize the approximation factor. By setting $\alpha=\frac{1}{2}+4\eps$, $\beta=\frac{1}{4}+2\eps$, and $T=\lceil10\log(k/\eps)\rceil$, the final bound is at most $16+O(\eps)$. For example, setting $\eps=0.01$ and $0.05$ gives us approximation factors of $18.97$ and $39.2$, respectively. 
\end{proof}

  \section{From Committees to Rankings}
\label{sec:ranking}

Building from the committee selection algorithm, we next construct algorithms that output full rankings over models.
Recall that we extend the \emph{local stability} notion from committee selection to rankings by requiring that, for every $k\le m$, the top-$k$ prefix of the ranking forms an (approximately) locally stable committee. This prefix-based formulation captures the fair representation property we seek at every cutoff of the leaderboard ranking.

The connection between committee selection and rankings is conceptually natural and has been explored in prior work~\citep{elkind2017properties,skowron2017proportional,aziz2025committee}, especially through \emph{committee monotonicity}, which requires that the committee selected for size $k$ be a subset of the committee selected for size $k+1$ for all $k$.\footnote{\citet{aziz2025committee} focuses on achieving committee monotonicity and the PSC (Proportionality for Solid Coalition) notion under the classic model where the algorithm has access to each user's complete ranking. In Appendix~\ref{app:PSC}, we show that PSC does not extend well to the leaderboard setting in which each user's preference is accessed through limited number of pairwise comparisons.} Under this condition, the sequence of nested committees can be naturally combined into a ranking, with candidates ordered so that each prefix coincides with a committee in the sequence and inherits its representation guarantees.

Our approach follows similar ideas, but we do not require committee monotonicity as a primitive. Instead, we begin from the ranking-level requirement and design algorithms that use committee selection as a subroutine to achieve this goal. This allows for more flexibility in how committee selection is employed.

\begin{algorithm}[tb]
  \caption{Ranking via Geometric Checkpoints}
  \label{alg:ranking-checkpoints}
  \begin{algorithmic}
    \STATE {\bfseries Input:} Sampling oracle $\cD$; committee selection subroutine $\textsc{Committee}$ (\Cref{alg:committee})
    \STATE {\bfseries Output:} A ranking $\pi\in\sym(\candidateset)$
    \STATE {\bfseries Parameter:} Growth factor $\lambda>1, R\gets\lceil\log_{\lambda}m\rceil$

    \STATE $\pi \gets [\,]$
    \FOR{$r=1,\ldots, R$}
    \STATE $k_r\gets \lfloor\lambda^{r-1}\rfloor$ (size of the next checkpoint committee)
      \STATE $\checkpoint_r \gets \textsc{Committee}(k_r,\cD)$
      \STATE Append $\checkpoint_r\setminus \checkpoint_{\le r-1}$ to $\pi$, of arbitrary order or as ties
      (where $\checkpoint_{\le r-1}:=\bigcup_{s\le r-1} \checkpoint_s$)
      \STATE {\bfseries if} $\checkpoint_{\le r}=\candidateset$  {\bfseries then return }$\pi$
    \ENDFOR

  \end{algorithmic}
\end{algorithm}

We present two methods for constructing rankings. 
The first uses the committee selection algorithm as a subroutine to produce a ranking that satisfies approximate local stability with a provable constant-factor guarantee. The second is a heuristic algorithm that modifies the iterated rounding procedure in Algorithm~\ref{alg:committee} to encourage a monotone growth of selected candidates across rounds, effectively inducing a ranking by the order in which candidates are added. While this second approach does not admit strong theoretical guarantees, it has the advantage of requiring fewer user samples, and we find that it performs well empirically.

\paragraph{Approach based on checkpoints.}
Our first approach is based on the observation that, to ensure a ranking is {approximately locally stable} at every prefix, it is sufficient to guarantee that each prefix committee $W_k^\pi$ contains a smaller sub-committee $\checkpoint\subseteq W_k^\pi$ that is approximately stable. 
Concretely, if the sub-committee is $\gamma$-approximately local stable and satisfies $|\checkpoint|\ge\frac{k}{c}$ for a constant $c$, then if a user prefers $a$ to the entire prefix $W_k^\pi$, they
also prefer $a$ to subset $\checkpoint$, and therefore the stability of $\checkpoint$ transfers to stability of $W_k^\pi$ with an additional multiplicative factor:
\[
\Pr_{i\sim\cD}[a\succ_i W_k^{\pi}]
\le \Pr_{i\sim\cD}[a\succ_i \checkpoint]
\le \frac{\gamma}{|\checkpoint|} \le \frac{c\cdot\gamma}{k}.
\]

To realize this idea, we use \emph{checkpoint committees} to impose a blockwise
structure on the ranking. We compute a sequence of approximately stable
committees $\checkpoint_1,\ldots,\checkpoint_R$ using
Algorithm~\ref{alg:committee}, with geometrically increasing sizes. 
The ranking $\pi$ is constructed by concatenating these committees: we first  set the top-ranked model as $\pi(1)=\checkpoint_1$ (which is a singleton committee), and for each $r \ge 2$, append the models in
$\checkpoint_r$ that have not been added previously as the next block (in arbitrary
order, or equivalently as ties). By construction, every prefix $W_k^\pi$
contains a checkpoint $\checkpoint_{r(k)}$ whose size is within a constant factor of $k$, and the above inequality transfers its stability guarantee to $W_k^\pi$.

We formalize this construction in
\Cref{alg:ranking-checkpoints} and state its theoretical guarantee in
\Cref{thm:ranking}. The proof of \Cref{thm:ranking} is deferred to \Cref{app:ranking}. Note that setting $\lambda=2$ gives the best theoretical guarantee, but we might vary $\lambda$ in practice to control the number of ties.
  
\begin{restatable}[Guarantee of \Cref{alg:ranking-checkpoints}]{theorem}{thmranking}
\label{thm:ranking}
    Let $\pi$ be the output ranking of \Cref{alg:ranking-checkpoints} with parameter $\lambda>1$.
    If the committee selection subroutine satisfies $\gamma$-approximate local stability, then the ranking $\pi$ satisfies $\tfrac{\lambda^2}{\lambda-1}\cdot\gamma$-approximate local stability if $\lambda$ is an integer, and $O(\tfrac{1}{1-\lambda})\cdot\gamma$ if $\lambda\searrow 1$.
\end{restatable}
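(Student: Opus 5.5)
The plan is to carry out the checkpoint transfer argument sketched before \Cref{alg:ranking-checkpoints}, with careful bookkeeping of which checkpoint sits inside each prefix. Fix an arbitrary cutoff $k\le m$ and let $W_k^\pi$ be the top-$k$ prefix. By construction, $\pi$ is the concatenation of the blocks $\checkpoint_r\setminus\checkpoint_{\le r-1}$. Hence for every $r$ the union $\checkpoint_{\le r}$ is exactly a prefix of $\pi$, of length $|\checkpoint_{\le r}|\le\sum_{s\le r}k_s$. Let $r(k)$ be the largest $r$ with $|\checkpoint_{\le r}|\le k$. Then $\checkpoint_{r(k)}\subseteq\checkpoint_{\le r(k)}\subseteq W_k^\pi$, whatever order is used within blocks. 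For every $a\notin W_k^\pi$ we have $a\notin\checkpoint_{r(k)}$, so $a\succ_i W_k^\pi$ implies $a\succ_i\checkpoint_{r(k)}$. By the $\gamma$-stability of the subroutine,
\[
\Pr_{i\sim\cD}[a\succ_i W_k^\pi]\le \frac{\gamma}{k_{r(k)}}.
\]
The theorem therefore reduces to the combinatorial inequality $k/k_{r(k)}\le \frac{\lambda^2}{\lambda-1}$ for integer $\lambda$, and $k/k_{r(k)}=O(\frac{1}{\lambda-1})$ as $\lambda\searrow1$.

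\textbf{Integer $\lambda$.} Here $k_r=\lambda^{r-1}$ exactly, so
\[
|\checkpoint_{\le r}|\le \sum_{s\le r}\lambda^{s-1}=\frac{\lambda^r-1}{\lambda-1}.
\]
Maximality of $r(k)$ means that either $\checkpoint_{\le r(k)+1}$ already exceeds $k$, or the algorithm has terminated (then $W_k^\pi$ is everything and there is nothing to prove). In the first case,
\[
k<|\checkpoint_{\le r(k)+1}|\le\frac{\lambda^{r(k)+1}-1}{\lambda-1}<\frac{\lambda^{2}}{\lambda-1}\,\lambda^{r(k)-1}=\frac{\lambda^2}{\lambda-1}\,k_{r(k)},
\]
which gives the claimed factor. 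I will check the edge case $k<|\checkpoint_1|=1$ separately; it cannot occur, since $k\ge1$.

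\textbf{The limit $\lambda\searrow1$.} This is the step I expect to be the main obstacle, because of the floors: several consecutive $k_r$ may coincide, and $\sum_{s\le r}\lfloor\lambda^{s-1}\rfloor$ is no longer a clean geometric sum. I would first bound $\lfloor\lambda^{s-1}\rfloor\le\lambda^{s-1}$ in the sum, which still gives $k<\frac{\lambda^{r+1}}{\lambda-1}$ with $r=r(k)$. Then I would lower bound $k_r=\lfloor\lambda^{r-1}\rfloor\ge\lambda^{r-1}/2$, using $\lambda^{r-1}\ge1$. Together these give
\[
\frac{k}{k_{r(k)}}\le\frac{2\lambda^2}{\lambda-1}=O\!\left(\frac{1}{\lambda-1}\right)
\]
for $\lambda\le2$. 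This yields the stated $O(\cdot)$ bound (the statement's $\frac{1}{1-\lambda}$ should be read as $\frac{1}{\lambda-1}$).

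One subtlety remains: repeated checkpoint sizes, or a checkpoint equal to an earlier one, can make blocks empty. This does not break the argument, because it only used that unions of checkpoints are prefixes and that $|\checkpoint_{\le r}|$ is bounded by the partial sums of the $k_s$. Finally, I would note that the subroutine is invoked at most $R$ times, so a union bound makes all checkpoints simultaneously $\gamma$-stable with probability at least $1-R\delta$.
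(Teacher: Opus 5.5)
Your proposal is correct and is essentially the paper's proof. Both locate a checkpoint $\checkpoint_{r(k)}$ inside the prefix, bound $|\checkpoint_{\le r}|$ by $\frac{\lambda^r-1}{\lambda-1}$, and use maximality of $r(k)$ to get $k<\frac{\lambda^{r(k)+1}-1}{\lambda-1}$. Your bound $\lfloor\lambda^{r-1}\rfloor\ge\lambda^{r-1}/2$ in the $\lambda\searrow1$ case is in fact cleaner than the paper's step $\lfloor\lambda^{r(k)-1}\rfloor\ge\lfloor k(\lambda-1)/\lambda^2\rfloor$, whose right-hand side can be $0$.

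The one slip is in your terminal case. If the loop runs out at $r(k)=R$ without reaching $\checkpoint_{\le R}=\candidateset$, then $W_k^\pi$ need not be all of $\candidateset$. This can happen because $|\checkpoint_R|=k_R<m$ and the checkpoints may overlap. The needed inequality still holds, since $k\le m\le\lambda^R<\frac{\lambda^{R+1}-1}{\lambda-1}$. The paper gets this automatically by defining $r(k)$ through $\frac{\lambda^r-1}{\lambda-1}\le k$ rather than through the actual sizes $|\checkpoint_{\le r}|$.
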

\begin{algorithm}[t]
  \caption{Ranking via Committee Monotonicity}
  \label{alg:ranking-single-additions}
  \begin{algorithmic}
    \STATE {\bfseries Input:} Sampling oracle $\cD$; committee selection subroutine $\textsc{Committee}$ (Algorithm~\ref{alg:committee})
    \STATE {\bfseries Output:} A ranking $\pi\in\sym(\candidateset)$
    \STATE {\bfseries Initialize} $\textsc{Committee}$ with committee size $m$ and single-addition decomposition $(k_t=1)_{t\le m}$. 
    \STATE $\pi \gets [\,]$
    \FOR{$t=1,\ldots,m$}
      \STATE $a_t\gets$ candidate added to the committee at round $t$
      \STATE Append $a_t$ to $\pi$ if it has not been added before
    \ENDFOR
    \STATE Append remaining candidates to $\pi$ in arbitrary order
    \STATE \textbf{return} $\pi$
  \end{algorithmic}
\end{algorithm}

\paragraph{Approach based on committee monotonicity.} Our second approach is more heuristic and based on a direct adaptation of the committee selection algorithm to encourage \emph{committee monotonicity}. Recall that \Cref{alg:committee} selects a size-$k$ committee through a multi-round process: it first specifies a decomposition $k=k_1+k_2+\ldots+k_T$, and then, in each round $t\in[T]$, it selects a sub-committee of size $k_t$ for the distribution of currently unsatisfied users. The decomposition that gives rise to the theoretical guarantee in \Cref{thm:committee} uses a geometric schedule where $k_t\propto \alpha^t\cdot k$. While it effectively produces stable committee at a fixed size, the decomposition can change substantially when moving from size-$k$ to size-$(k+1)$ committee. As a result, the output committees tend to violate monotonicity in general: $\hat{S}_{k}\not\subseteq\hat{S}_{k+1}$.

To address this issue, we consider an alternative decomposition that adds candidates one at a time. Specifically, for selecting size-$k$ committee, we set $T=k$ rounds and $k_t=1$ for all $t\in[T]$. Under this decomposition, the committee selection process grows the committee incrementally by a single candidate in each round. This modification directly ensures committee monotonicity: constructing a size-$(k+1)$ committee is equivalent to extending the size-$k$ committee construction by one additional round, which naturally yields nested committees when randomness is shared. 

Using this alternative decomposition schedule, it suffices to run a single instance of \Cref{alg:committee} with target committee size $m$, and obtain the final ranking $\pi$ by ordering candidates according to the rounds in which they are added to the committee. We summarize this approach in \Cref{alg:ranking-single-additions}.

  \section{Experiments}
\begin{figure*}[t]
    \centering
    \begin{tabular}{ccc}
         \footnotesize{Committee stability $(\phi = 0.1)$} & \footnotesize{Committee stability $(\phi = 0.5)$} &  \footnotesize{Committee stability $(\phi = 0.9)$}\\
         \includegraphics[width=0.3\linewidth]{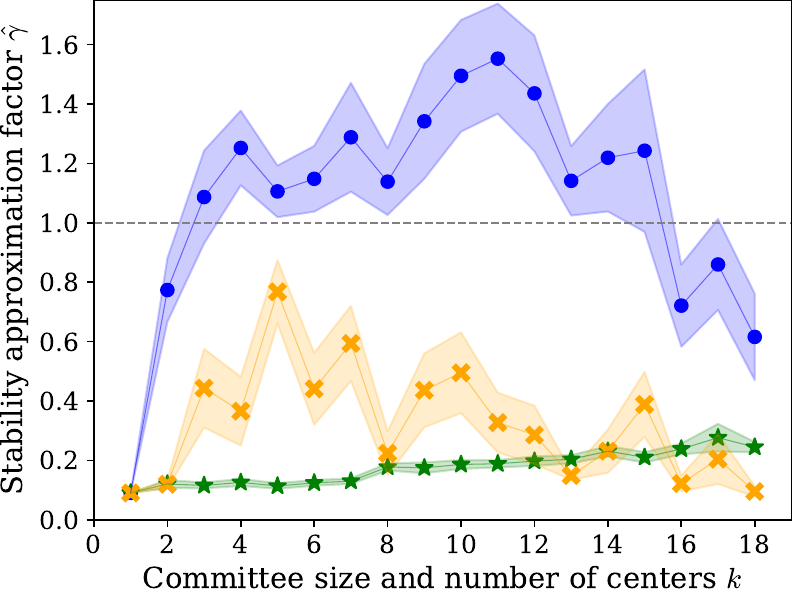} & \includegraphics[width=0.3\linewidth]{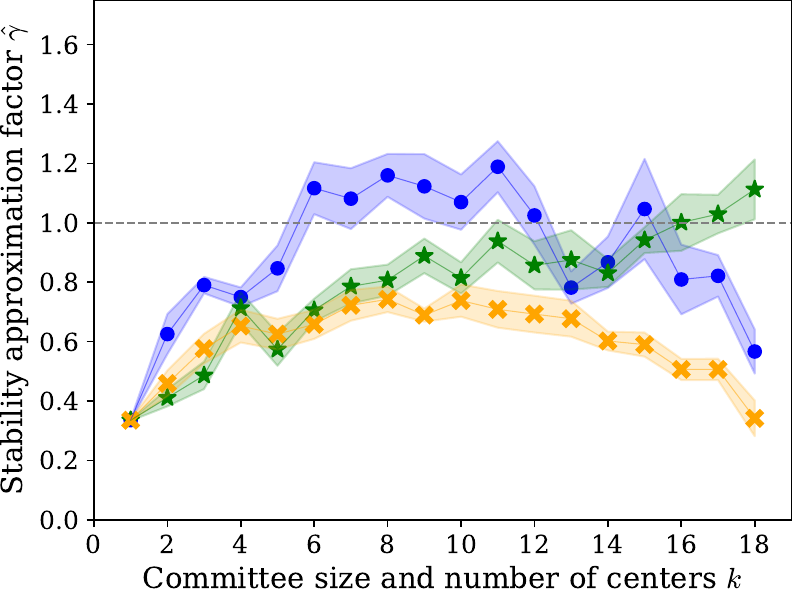} & \includegraphics[width=0.3\linewidth]{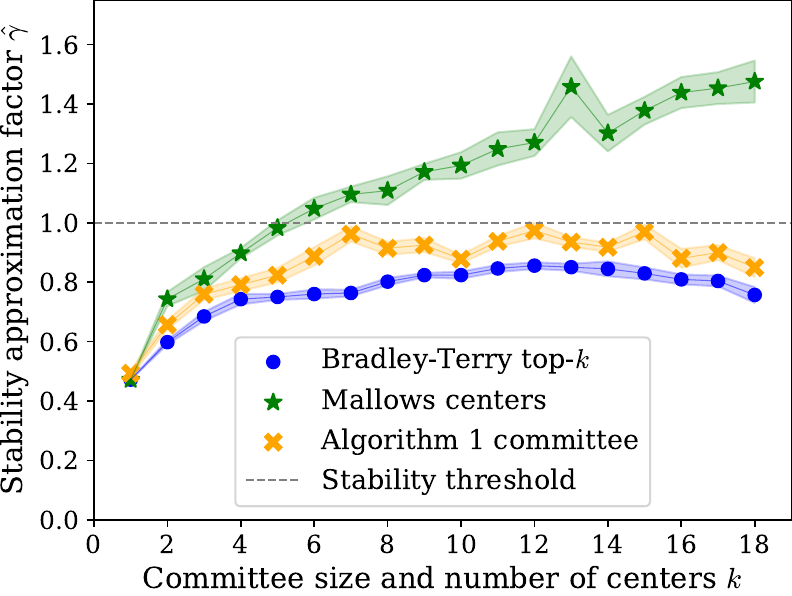}
    \end{tabular}
    
    \caption{\footnotesize Comparison on mixtures of Mallows of the stability of committees produced by Algorithm 1 and the two described baseline methods. For each $k$, a mixture of Mallows model is created with $k$ central rankings sampled uniformly at random, and a committee of size $k$ is produced by each method. Values of $\hat{\gamma} \leq 1$ indicate that the committee produced was stable. The plotted points and error regions show the mean and standard error, respectively, after 10 repeats.}
    \label{fig:mom}
\end{figure*}

\begin{figure*}[t]
    \centering
    \begin{tabular}{ccc}
         \footnotesize{Ranking stability ($\phi = 0.1$)} & \footnotesize{Ranking stability ($\phi = 0.5$)} &  \footnotesize{Ranking stability ($\phi = 0.9$)}\\
         \includegraphics[width=0.3\linewidth]{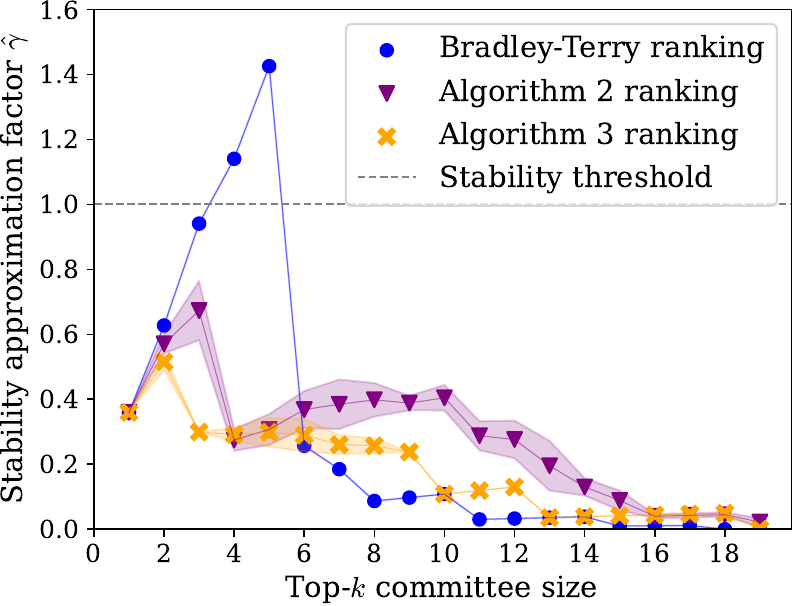} & \includegraphics[width=0.3\linewidth]{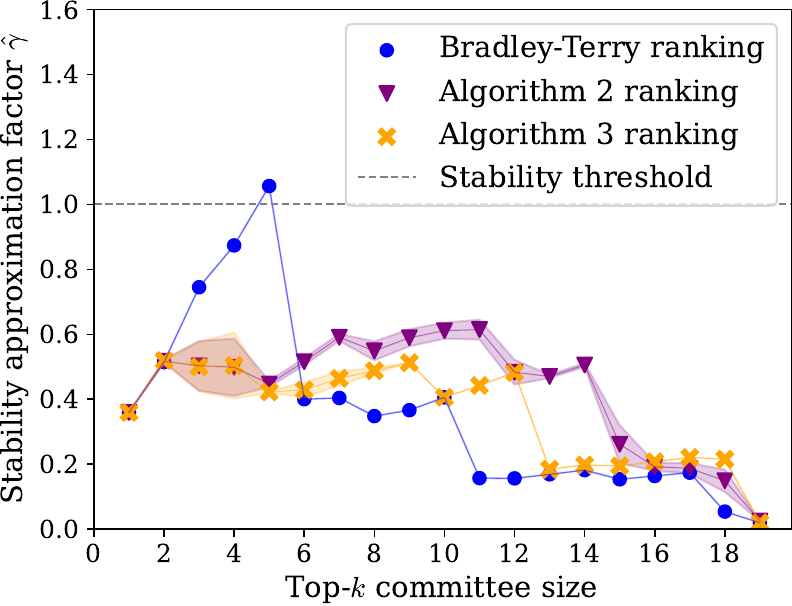} & \includegraphics[width=0.3\linewidth]{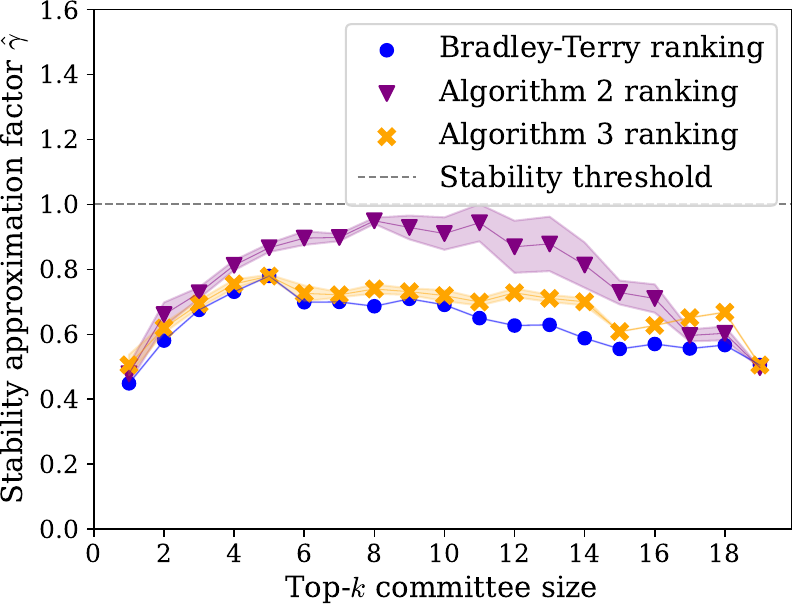}
    \end{tabular}
    
    \caption{Comparison on LMArena-based simulation of the stability of Bradley-Terry rankings vs. the rankings produced by Algorithms 2 and 3. Values of $\hat{\gamma} \leq 1$ indicate that the top-$k$ prefix committee was stable for each $k$.
    The plotted points and error regions show the mean and standard error, respectively, after 5 repeats. A sample of exact rankings for each method is given in Appendix \ref{app:lmarena_rankings}.}
    \label{fig:lmarena_top-k}
\end{figure*}

We empirically demonstrate the performance of our committee selection algorithm and both ranking algorithms across both synthetic and semi-synthetic user distributions. We begin with fully synthetic simulations, where the goal is to empirically measure the stability of committees selected using Algorithm 1, which we find to be significantly better than the bounds given in Theorem \ref{thm:committee}. We then turn to a semi-synthetic experiment based on real data from LMArena \citep{CZSA+24}. We show that the currently adopted Bradley-Terry ranking indeed violates stability for some top-$k$ prefix committees, whereas our ranking is able to preserve stability for all top-$k$ prefix committees.

\subsection{Committee Stability with Mixtures of Mallows}\label{sec:experiments_mom}

We construct synthetic user distributions $\mathcal{D}$ using a mixture of Mallows models \citep{mallows1957non}, a well-established model for user heterogeneity applied across social choice and recommender systems. We employ the standard Mallows-$\phi$ parameterization, where each mixture component consists of a central ranking $\pi$ and a dispersion parameter $\phi$, and the probability of drawing a ranking $r$ is given by $\Pr(r) \propto \phi^{d(r, \pi)} $, where $d(r, \pi)$ is the Kendall tau distance. 

As a basic demonstration of the ability of Algorithm 1 to select a stable committee from a heterogeneous population, we create a mixture of $k$ Mallows models with central rankings $\{\pi_1,...,\pi_k\}$ drawn uniformly at random from all permutations over $m=20$ candidates, and identical dispersion parameters $\phi$ and mixture weights $\frac{1}{k}$. We then measure the approximate stability of a committee of size $k$ produced by Algorithm 1. We compare this to two baselines: \textit{(i)} an ``ideal'' baseline consisting of the top-ranked candidate from each central ranking, $ W_k = \{\pi_1(1),...,\pi_k(1)\}$, and \textit{(ii)} a ``status quo'' baseline consisting of the top-$k$ prefix of a Bradley-Terry ranking.

To measure stability for a given committee $W_k$, we estimate a stability approximation factor $\hat{\gamma}$ from $n$ rankings sampled from the mixture: 
$\hat{\gamma} = k \left(\max_{a\notin W_k} \frac{1}{n} \sum_{i=1}^n \indicator{a \succ_i W_k}\right)$. 
This roughly captures the proportion of users that are unsatisfied with the committee $W_k$. A value of $\hat{\gamma} = 1$ means that no more than the required $\frac{1}{k}$ fraction of users is unsatisfied.

\paragraph{Results.} Figure \ref{fig:mom} shows the approximation factors $\hat{\gamma}$ for committees produced using Algorithm 1, as well as the two previously described baselines. Across varying $k$, we find that Algorithm 1 manages to produce committees that are stable, and with $\hat{\gamma}$ being significantly better than the bound in Theorem \ref{thm:committee}. In contrast, the top-$k$ prefix committee of the Bradley-Terry ranking actually violates stability for many $k$ when $\phi = 0.1$ and $\phi=0.5$. Finally, the committee consisting of the top-ranked model in each Mallows central ranking (labeled ``Mallows centers'') performs well when $\phi$ is small, but as $\phi$ increases, the probability that a user's top candidate is not ranked first by one of the Mallows centers also increases. Thus, the coverage of the Mallows centers worsens as $\phi$ increases, and violates stability when $\phi = 0.9$.

\subsection{Ranking Stability with LMArena Simulation}\label{sec:experiments_lmarena}

We next compare the stability of rankings produced by our proposed algorithms to that of a Bradley-Terry ranking, which is currently applied by the LMArena platform.

We construct semi-synthetic user distributions based on the recent ``arena-human-preference-140k'' dataset released by LMArena \citep{lmarenablog, lmarenahuggingface} (as our algorithms are online algorithms, we cannot run these on the static LMArena dataset directly).

Candidates correspond to LLMs (of which we filter to just the top $m=20$ by overall Bradley-Terry ranking), and users correspond to users that make pairwise comparisons between LLMs for various prompts. 
We construct a mixture of Mallows distribution that seeks to roughly capture the heterogeneity in rankings by prompt category.

Specifically, we create a mixture of Mallows with one center for each of the $20$ largest prompt categories (by \texttt{category\_tag}), with each central ranking set to the Bradley-Terry ranking per category.

The mixture weights are given by the relative number of pairwise comparisons per category. 
Appendix \ref{app:lmarena_rankings} reports exact rankings and weights found for each category.
As before, we conduct experiments for varying dispersions $\phi$, using the same dispersion for all categories. In essence, we model a population of users whose preferences are generally clustered around prompt categories, and higher dispersions $\phi$ correspond with higher user diversity within these clusters.

\paragraph{Results.} 

Figure \ref{fig:lmarena_top-k} shows the stability approximation factors $\hat{\gamma}$ of each of the top-$k$ committees for the Bradley-Terry ranking and the rankings produced by Algorithms 2 and 3. 
The Bradley-Terry ranking actually violates stability for top-$k$ prefix committees with $k=4,5$ when $\phi=0.1$, and $k=5$ when $\phi=0.5$. In contrast, both Algorithms 2 and 3 maintain stability for all $k$ and all $\phi$. Intuitively, stability is ``easiest'' to satisfy when $\phi=0.9$, since this corresponds to high enough user diversity that there are no large clusters of users to activate the stability constraint. While Algorithm 2 has stronger theoretical guarantees, we find that Algorithm 3 performs slightly better empirically.

  \section{Discussion}

We conclude by discussing several promising directions for future research. First, our framework assumes that a user's type deterministically yields their preference ranking over models. In practice, user preferences may depend on the specific prompt as well as stochasticity in model responses. Even within a single multi-round interaction on LMArena, model pairs are often resampled for each prompt and treated as new battles, which can lead to variability in the user's preferences. Extending our framework to prompt-dependent preferences is an important direction for future work.

Second, while our proposed algorithms guarantees approximate local stability, they rely on the ability to  \emph{adaptively} choose which model pairs are presented to users in each battle, based on both previous interactions with the same user and the history of battles contributed by other users. In many real-world settings, however, one must work with pre-collected preference datasets in which users already contribute varying numbers of comparisons.

It would be interesting to study how to achieve fair representation in this offline setting.

Finally, it is natural to consider stronger notions of representation beyond local stability, such as \emph{proportional} representation. While local stability guarantees that cohesive user coalitions of weight $O(1/k)$ are represented, proportional representation notions aim to guarantee that representation scales \emph{proportionally} with the size of cohesive groups. For instance, \citet{aziz2017condorcet} also study the notion of \emph{full local stability} which ensures any user subgroup of $O(\ell/k)$ fraction of users have no collective deviation of size $\ell$. 

Proportionality for Solid Coalitions (PSC)~\citep{dummett1984voting,aziz2025committee} is another prominent proportional representation notion; however, as we show in \Cref{app:PSC}, even access to all $k$-wise comparisons is insufficient to verify PSC for committee size of $k$.

Future work could seek to understand the query complexity required to achieve such stronger guarantees under pairwise comparison access and design efficient algorithms to do so.

\section*{Acknowledgments}
This work was partially supported by the NSF Institute for Foundations of Machine Learning under grant CCF-2505865, the National Science Foundation under grants IIS-2229881 and CCF-2145898; by the Office of Naval Research under grants N00014-24-1-2704, N00014-25-1-2153, and N00014-24-1-2159; an Amazon Research Award, an Alfred P. Sloan fellowship, a Schmidt Sciences AI2050 fellowship, an Adobe Research gift, and by grants from the Cooperative AI Foundation and the Foresight Institute.

  \bibliographystyle{plainnat}
  \bibliography{abb,ultimate,refs}

  \newpage
  \appendix
  \section{Notion for Proportional Representation}
\label{app:PSC}

In this section, we discuss an alternative notion called Proportionality for Solid Coalitions (PSC)~\citep{dummett1984voting,aziz2025committee}, which is a \emph{proportional representation} notion that gives stronger guarantees for larger cohesive subgroups.

\begin{definition}[Proportionality for Solid Coalitions (PSC)]
A committee $W_k$ of size $k$ satisfies \emph{Proportionality for Solid Coalitions (PSC)} for user distribution $\cD$ if, for any candidate set $C\subseteq\candidateset$, if $\Pr_{i\sim\cD}[\forall a\in C, \forall b\notin C, a\succ_i b]>\frac{\ell}{k+1}$ for some $\ell\in \mathbb{N}$, then it holds that $|C\subseteq W_k|\ge\min\{\ell,|C|\}$.
\end{definition}

We will show that it is challenging to verify PSC with incomplete rankings. Following \citet{HKPT+23,halpern2024computing}, we consider incomplete rankings of the form of $k$-wise comparisons, where we assume that the algorithm can pick a size-$k$ subset $A\subseteq \candidateset$ and ask a random user sampled from $\cD$ to provide their ranking of $A$ in the form of a $k$-wise comparison. Note that $k$-wise comparison can be implemented using $O(k\log k)$ adaptive pairwise comparisons or $O(k^2)$ non-adaptive ones.

\begin{proposition}[Impossibility of verifying PSC with $k$-wise comparisons]
\label{prop:PSC-k-wise}
For any committee size $k\le m-1$, there does not exist a deterministic algorithm that can verify whether any given size-$k$ committee satisfies PSC, using only $k$-wise comparison queries. In addition, no randomized algorithm can be correct with probability greater than $\frac{k}{k+1}$.

\end{proposition}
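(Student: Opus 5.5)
We will argue by indistinguishability: construct a family of $k+1$ user distributions that induce \emph{identical} distributions over the answers to every $k$-wise comparison query, yet a fixed committee $W$ satisfies PSC under some of them and violates it under others. Then any algorithm, deterministic or randomized, sees the same transcript distribution in every instance and must output the same (distribution over) verdicts. A deterministic algorithm is therefore wrong on at least one instance. For a randomized algorithm we place a prior over the instances so that the best achievable success probability is at most $\frac{k}{k+1}$.

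\textbf{Construction.} Fix $k+1$ candidates $C=\{c_1,\dots,c_{k+1}\}$; these exist since $k\le m-1$. All other candidates are ranked below $C$ in a fixed order by every user. The key object is a distribution over orderings of $C$ whose $k$-wise marginals cannot detect which element is a designated one. For each $j$, let $\cD_j$ place mass $\frac{1}{k+1}$ on each of $k+1$ rankings that cyclically rotate the candidates, and use the designated index $j$ to vary the relative order of the block. More concretely, we want any two instances to agree on the distribution of the ranking restricted to every $k$-subset $A$ (equivalently, on every restriction that misses at least one candidate), while disagreeing on the full $(k+1)$-wise ranking. Such pairs exist by a standard signed-permutation construction. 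Take a signed combination $\sum_\sigma w_\sigma \sigma$ of rankings of $C$ all of whose projections onto $k$-subsets vanish, e.g.\ an alternating sum over the cosets that fix the positions of all but one element in a suitable way. Add this combination, scaled, to a uniform base distribution to get two valid distributions. We then choose the perturbation so that under one of them some solid coalition (the users ranking a particular $\ell$-set on top) exceeds $\frac{\ell}{k+1}$ but is not covered by $W$, whereas under the other all such coalitions stay at or below the threshold. A natural choice is $W=\{c_1,\dots,c_k\}$ together with the coalition set $\{c_{k+1}\}$ with $\ell=1$: we need the mass of users ranking $c_{k+1}$ first to be $>\frac{1}{k+1}$ in the violating instance and $\le\frac{1}{k+1}$ in the satisfying one. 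The top-choice distribution is not a $k$-wise marginal (it depends on all $k+1$ elements), so it can move.

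\textbf{Randomized bound.} To get $\frac{k}{k+1}$ rather than $\frac12$, we build $k+1$ instances $\cD_1,\dots,\cD_{k+1}$ sharing all $k$-wise marginals, with the verdict on $W$ being ``yes'' in exactly $k$ of them and ``no'' in one (or the reverse). We then consider a uniform prior over these instances. Since the transcript is independent of the instance, the algorithm's success probability is at most $\max(p\cdot\frac{k}{k+1}+(1-p)\frac{1}{k+1})\le\frac{k}{k+1}$, and by Yao's principle some instance forces error probability at least $\frac1{k+1}$.

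\textbf{Main obstacle.} The hard step is explicitly exhibiting distributions on $\sym(C)$ with equal $k$-marginals but different top-choice frequencies crossing the threshold $\frac{1}{k+1}$, and keeping all other PSC constraints (larger $\ell$, other sets) on the correct side. I would first try the cyclic family: $\cD_{\text{sat}}$ uniform over the $k+1$ cyclic shifts of $(c_1,\dots,c_{k+1})$, so each candidate is top with mass exactly $\frac1{k+1}$, which is not strictly greater than $\frac{1}{k+1}$. For the violating instance, I would look for a distribution (perhaps using reversed cycles or a parity-based correction) with the same pairwise and $k$-wise projections that puts slightly more mass on $c_{k+1}$ first. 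I would then verify directly that every proper subset's induced ranking distribution matches.
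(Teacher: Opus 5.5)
Your reduction is the same as the paper's: take $k+1$ candidates $A$, push all others to the bottom in a fixed order, use $k+1$ mutually $k$-indistinguishable instances in which exactly one designated candidate's singleton coalition exceeds $\frac{1}{k+1}$, fix $W\subseteq A$ with $|W|=k$, and average over a uniform prior. The gap is the existence of those instances, the one non-routine ingredient, which you leave as the ``main obstacle.'' The paper does not construct them either. It imports them from \citet{halpern2024computing}, whose content is exactly that plurality scores are not determined by the $k$-wise marginals when there are $k+1$ candidates.

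Your justification (``the top-choice frequency is not a $k$-wise marginal, so it can move'') does not establish this. Borda scores also depend on all $k+1$ candidates, yet they are determined by the pairwise marginals. The ``alternating sum over cosets'' is too vague to check. The route you would try first provably fails already for $k=2$. Each ranking of $\{a,b,c\}$ satisfies at most two of $a\succ b$, $b\succ c$, $c\succ a$, and each of these has probability $\frac23$ under the uniform distribution on the three cyclic shifts. So any distribution with these pairwise marginals puts all its mass on rankings satisfying exactly two of them, namely the shifts, and equal weights are then forced. The cyclic instance therefore has no violating twin, and you need a full-support base distribution.

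A self-contained fix is as follows. For each $c\in A$, let $\cD^c$ place $c$ at position $p\in[k+1]$ with probability
\[
q(p)=\frac{1}{k+1}+\eta\,(-1)^{p-1}\binom{k}{p-1},
\]
and fill the remaining slots with a uniformly random order of $A\setminus\{c\}$, for small $\eta>0$.

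\emph{Marginals.} Restricting to $A\setminus\{c\}$ gives the uniform distribution. Restricting to $A\setminus\{d\}$ with $d\neq c$, the other $k-1$ candidates stay in uniform order, independent of $c$'s new position. That position equals $j$ with probability $q(j)\frac{k+1-j}{k}+q(j+1)\frac{j}{k}$, and the $\eta$-terms cancel because $(k+1-j)\binom{k}{j-1}=j\binom{k}{j}$. Hence every $\cD^c$ has the same ${\le}k$-wise marginals as the uniform distribution.

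\emph{Plurality.} We have $\plu_{\cD^c}(c)=\frac{1}{k+1}+\eta$, and $\plu_{\cD^c}(x)=\frac{1}{k+1}-\frac{\eta}{k}$ for $x\neq c$.

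\emph{Other PSC constraints.} This construction also settles the constraints you were worried about, which the paper does not check. A coalition on $C'\subsetneq A$ with $|C'|\ge 2$ has mass $\binom{k+1}{|C'|}^{-1}+O(\eta)<\frac{2}{k+1}$. So only $\ell=1$ can bind, and $W$ meets $C'$. A set $C'\supseteq A$ has mass at most $1$, forcing $\ell\le k$, and it contains all of $W$. Every other set has mass $0$.

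Hence $W$ satisfies PSC under (the extension of) $\cD^c$ if and only if $c\in W$. Your averaging argument over the uniform prior then gives the $\frac{k}{k+1}$ bound.
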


\begin{proof}[Proof of \Cref{prop:PSC-k-wise}]
It suffices to prove the claim in the infinite-sample limit, in which the algorithm has access to the exact marginal distribution of the user preference profile restricted to every subset $A\subseteq \candidateset$ with $|A|\le k$. Any impossibility in this setting immediately implies the same impossibility for finite samples.

We leverage the notion of \emph{$k$-indistinguishability} introduced by \citet{halpern2024computing}.
Two user distributions $\cD^1,\cD^2 \in \Delta(\sym(\candidateset))$ are said to be $k$-indistinguishable if \[ \cD^1|_A = \cD^2|_A \qquad \text{for all } A\subseteq\candidateset \text{ with } |A|\le k. \]

\paragraph{Our construction.}
Let $m' := k+1$ and consider candidate subset $A=\{1,2,\ldots,m'\}\subseteq\candidateset$.
By the construction of \citet{halpern2024computing}, there exists a family of $m'$ distributions
$\{\cD^c\}_{c\in [m']}\subseteq \Delta(\sym(A))$ such that:
\begin{enumerate}
    \item The distributions $\{\cD^c\}_{c\in[m']}$ are $k$-indistinguishable, and
    \item In each distribution $\cD^c$, candidate $c$ is the unique candidate in $A$ whose plurality score exceeds $\frac{1}{m'}$:
    \[
    \plu_{\cD^c}(c):=\Pr_{i\sim \cD}[c\succ_i A\setminus\{c\}]>\frac{1}{m'};
    \text{ whereas }\plu_{\cD^c}(a)<\frac{1}{m'}, \ \forall a\in A\setminus\{c\}.
    \]
\end{enumerate}

We extend each $\cD^c$ to a distribution $\widetilde{\cD}^c\in \Delta(\sym(\candidateset))$ by appending the remaining
candidates $\{m'+1,\ldots,m\}$ to the bottom of every ranking in a fixed order.
Formally, if $\sigma_i\sim\cD^c$ is the ranking of user $i$ over $A$, then we define
\[
\widetilde{\sigma}_i \ :=\ \sigma_i\ \text{ followed by }\ (m'+1)\succ_i (m'+2)\succ_i\cdots\succ_i m,
\]
and let $\widetilde{\cD}^c$ denote the induced distribution over rankings $\widetilde{\sigma}_i$.

 Next, we will show that PSC forces the inclusion of $c$. Since the additional candidates are appended below those in $A$, the plurality scores of candidates in $A$ are unchanged. In particular, \[ \plu_{\widetilde{\cD}^c}(c) = \plu_{\cD^c}(c) > \frac{1}{k+1}. \] Thus, the set of users who rank $c$ first forms a solid coalition for $\{c\}$ of size exceeding the PSC threshold. Consequently, any size-$k$ committee $W_k$ that satisfies PSC under $\widetilde{\cD}^c$ must contain $c$.

On the other hand, it is not hard to see that $k$-indistinguishability is preserved after appending these extra candidates. As a result,  any algorithm that only accesses $k$-wise comparison information receives identical input (or identically distributed input in the finite-sample regime) for all distributions $\widetilde{\cD}^c$ and therefore must output the same YES/NO for each of them.

Now Consider a committee $W\subseteq A$ of size $k$. 
The above argument shows that $W$ satisfies PSC under $\widetilde{\cD}^c$ for every $c\in W$, but fails to satisfy PSC under $\widetilde{\cD}^{c^\star}$ for $c^\star\in A\setminus W$. However, the algorithm must return the same answer for all $\widetilde{\cD}^c$, and thus cannot be correct on all instances. This completes the proof for deterministic algorithms.  
The randomized bound follows from the same construction.
\end{proof}

\section{Proof of \Cref{thm:committee}}
\label{app:main-theorem}

\thmcommittee*

We begin by proving \Cref{thm:committee} for the simplified setting in which the distribution $\cD$ has finite support of size $\kappa=\poly(m,k,1/\eps,1/\delta)$. We address the extension to general distributions in \Cref{app:extension}.

Following the proof sketch in \cref{sec:committee}, we can apply \Cref{lemma:guarantee-for-satisfied-users,lmm:shrinking} iteratively at each round, conditioning on the success event of previous rounds and $|\Rank(i;S,\Delta_t)-\hatrank(i;S,\Delta)|\le\eps$ for all $i\in\support(\cD)$\footnote{As a result, $|\Rank(i;S,\Delta_t)-\hatrank(i;S,\Delta_t)|\le\eps$ also holds for all $i\in\support(\widehat{\cD}_t^{\beta-\eps})$ since $\widehat{\cD}_t^{\beta-\eps}$ is obtained by performing rejection sampling on $\cD$.} and all $S\in\support(\Delta_t)$. By \Cref{lmm:union}, a union bound over all rounds, and the guarantee from \citet{jiang2020approximately} that lotteries $\Delta_t$ has support size $\poly(m,1/\eps)$, it suffices to implement the rank estimator using \Cref{alg:rank-estimation} with $L=O\left(\frac{\log(T\kappa\cdot|\support(\Delta)|/\delta)}{\eps^2}\right)=O\left(\frac{\log(\tfrac{m}{\eps\delta})}{\eps^2}\right)$ sampled committees. 
This guarantees the total failure probability of the rank estimator is at most $\delta$.

Now we analyze the sample complexity of the algorithm. 
\begin{itemize}
    \item \textbf{Pairwise comparisons per user.} \Cref{alg:committee} accesses users through the rejection sampling oracle implemented in \Cref{alg:rejection-sampling}. For each user $i$, this oracle calls the rank estimator $\le T$ times, with the sequence of $S_{\tau}$ and $\Delta_{\tau}$ in previous rounds. The total number of pairwise comparisons that $i$ being called at round $t$ needs to make is
    \[
    O\left(\sum_{\tau=1}^t
    |S_{\tau}|\cdot
    \frac{\log(\tfrac{m}{\eps\delta})}{\eps^2}\right)
    =O\left(\frac{k\log(\tfrac{m}{\eps\delta})}{\eps^2}\right).
    \]
    When $\eps$ is set to be a constant, each user contributes ${O}(k\log(m/\delta))$ pairwise comparisons.
    \item 
    \textbf{Total number of sampled users.} In each round $t$, computing the $(1+\eps)$-approximately stable lottery $\Delta_t$ using the algorithm of \citet{cheng2020group} requires $\poly(m,1/\eps)$ calls to the sampling oracle $\widehat{\cD}_t^{\beta-\eps}$. For the sampling oracle \Cref{alg:rejection-sampling}, to make sure that the total failure probability is at bounded by $O(\delta)$, it suffices to set the number of trials to be $M=O\left(k\log(\tfrac{m}{\eps\delta})\right)$.
    The next step of selecting $S_t$ from the support of $\Delta_t$ requires $N\cdot|\support(\Delta_t)|$ sampled users, where $N=O\left({\log(\tfrac{m}{\eps\delta})}/{\eps^2}\right)$ according to \Cref{lmm:shrinking}. Put together, the number of user sampled in each round is $O(\poly(m,\frac{1}{\eps},\log(\frac{1}{\delta}))$. Since there are $T=O(\log (\tfrac{k}{\eps}))$ rounds, the total number of users sampled remains to be
    \[
    O\left(\poly(m,\tfrac{1}{\eps},\log(\tfrac{1}{\delta})\right).
    \]
\end{itemize}

\subsection{Proof of \Cref{lemma:guarantee-for-satisfied-users}}
\begin{lemma*}[Restatement of \Cref{lemma:guarantee-for-satisfied-users}]
  Let $S_t$ be the sub-committee selected at round $t$, and $x$ be the threshold used to measure satisfaction with the estimated rank.
       We have that for any alternative $a\notin S_t$,
        \[
        \Pr_{i\sim \widehat{\cD}_t^{x}}\left[a\succ_i S_t \ \text{and} \ i\in \widehat{U}_t^{x}\setminus \widehat{U}_{t+1}^{x}\right]\le\frac{1+\eps}{k_t\cdot (x-\eps)}.
        \]
        In other words, users who are satisfied in the $t$-th round (i.e., users in $\widehat{U}_t^{x}\setminus \widehat{U}_{t+1}^{x}$) have bounded probability of deviation.
\end{lemma*}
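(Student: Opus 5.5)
The plan is to adapt the iterated-rounding argument of \citet{jiang2020approximately}, using the approximately stable lottery $\Delta_t$ and accounting for estimation error in the rank. I will work conditionally on the good event from \Cref{lmm:union}: for every user $i$ in the support and every $S\in\support(\Delta_t)$, $|\Rank(i;S,\Delta_t)-\hatrank(i;S,\Delta_t)|\le\eps$. Fix $a\notin S_t$ and write $G$ for the event $\{a\succ_i S_t\}\cap\{i\in\hatU_t^x\setminus\hatU_{t+1}^x\}$. A user in $\hatU_t^x\setminus\hatU_{t+1}^x$ is one whose estimated rank at round $t$ exceeds $x$, i.e.\ $\hatrank(i;S_t,\Delta_t)>x$. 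On the good event this gives $\Rank(i;S_t,\Delta_t)> x-\eps$, so $\Pr_{S'\sim\Delta_t}[S_t\succeq_i S']> x-\eps$.

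\textbf{Step 1 (translating to a statement about the lottery).} For such a user with $a\succ_i S_t$, every $S'$ with $S_t\succeq_i S'$ satisfies $a\succ_i S'$, because $i$'s favorite in $S'$ is no better than its favorite in $S_t$, which is below $a$. Hence $\Pr_{S'\sim\Delta_t}[a\succ_i S']\ge \Pr_{S'\sim\Delta_t}[S_t\succeq_i S'] > x-\eps$. Therefore $\indicator{G}\le \frac{1}{x-\eps}\Pr_{S'\sim\Delta_t}[a\succ_i S']$ pointwise. Taking expectations over $i\sim\widehat{\cD}_t^x$ and swapping the order of expectation (Fubini) gives
\[
\Pr_{i\sim\widehat{\cD}_t^x}[G]\le\frac{1}{x-\eps}\,\mathbb{E}_{S'\sim\Delta_t}\Big[\Pr_{i\sim\widehat{\cD}_t^x}[a\succ_i S']\Big].
\]

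\textbf{Step 2 (invoking stability of the lottery).} $\Delta_t$ is a $(1+\eps)$-approximately stable lottery of size $k_t$ for $\widehat{\cD}_t^{x}$ (with $x=\beta-\eps$ in the algorithm). By the definition of a stable lottery from \citet{cheng2020group}, for every candidate $a$,
\[
\mathbb{E}_{S'\sim\Delta_t}\Pr_{i\sim\widehat{\cD}_t^x}[a\succ_i S']\le\frac{1+\eps}{k_t}.
\]
There is a subtlety when $a\in S'$: then $a\succ_i S'$ is impossible, so those terms contribute zero and the inequality is only helped. Combining this with Step 1 yields the bound $\frac{1+\eps}{k_t(x-\eps)}$.

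\textbf{Main obstacle.} The delicate point is making sure the lottery guarantee applies to exactly the distribution appearing in the statement. $\Delta_t$ is computed from samples of $\widehat{\cD}_t^{\beta-\eps}$, so its $(1+\eps)$ guarantee must be stated with respect to that true conditional distribution, possibly with failure probability absorbed into $\delta$. The second point is that the satisfaction event is defined through $\hatrank$ rather than $\Rank$, which is what forces the $x-\eps$ in the denominator. I will handle the first point by citing the sampling-based stable-lottery guarantee of \citet{cheng2020group} for the oracle $\widehat{\cD}_t^{x}$. I will handle the second with the uniform error event from \Cref{lmm:union}, which holds simultaneously over all users in the support and all $S\in\support(\Delta_t)$.
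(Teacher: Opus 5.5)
Your proof is correct and follows the paper's argument. You condition on the uniform accuracy event of \Cref{lmm:union} to turn $\hatrank(i;S_t,\Delta_t)>x$ into $\Rank(i;S_t,\Delta_t)>x-\eps$, then apply the stable-lottery deviation bound at threshold $x-\eps$. The only difference is that you derive that bound inline, via $S_t\succeq_i S'\Rightarrow a\succ_i S'$ and a Markov/Fubini step against the size-one deviation condition of the lottery, whereas the paper cites Claim 1 and Lemma 1 of \citet{jiang2020approximately}, which is exactly this argument.
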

\begin{proof}[Proof of \Cref{lemma:guarantee-for-satisfied-users}]
   We condition throughout on the success event of \Cref{lmm:union} for lottery $\Delta_t$, under which the estimated rank $\hatrank(i;S,\Delta_t)$ is an $\eps$-additive approximation of the true rank $\Rank(i;S,\Delta_t)$ for all relevant users $i$ and all committees $S\in \support(\Delta)$.

    We first recall a result about stable lottery from \citep[Claim 1 and Lemma 1]{jiang2020approximately}.
 Suppose $\Delta$ is a $\gamma$-approximately stable lottery for $\cD$ with committee size $k$, and $S$ is any committee in the support of $\Delta$. Then, for any threshold $x\in(0,1)$ and any alternative $a\in\candidateset$,
 we have 
 \[
        \Pr_{i\sim D}\left[a\succ_i S \ \text{and} \ \Rank(i; S,\Delta)\ge x \right]\le\frac{\gamma}{kx}.
        \]

We now apply this fact to round $t$ of the execution of \Cref{alg:committee}.
Since $\Delta_t$ is an $(1+\eps)$-approximately stable lottery for $\widehat{\cD}_t^{x}$, we have
\begin{align*}
    \Pr_{i\sim \widehat{\cD}_t^{x}}\left[a\succ_i S \ \text{and} \ i\in \widehat{U}_t^{x}\setminus \widehat{U}_{t+1}^{x}\right]\le&\Pr_{i\sim \widehat{\cD}_t^{x}}\left[a\succ_i S \ \text{and} \ \hatrank(i;S_t,\Delta_t)>x\right]\\
    \le&\Pr_{i\sim \widehat{\cD}_t^{x}}\left[a\succ_i S \ \text{and} \ \Rank(i;S_t,\Delta_t)>x-\eps\right]\\\le&\frac{1+\eps}{k_t\cdot (x-\eps)},
\end{align*}
where the second inequality uses the $\eps$-additive accuracy of $\hatrank$ guaranteed by \Cref{lmm:union}, and the final inequality follows from the approximate stability bound above.
\end{proof}
        
\subsection{Proof of \Cref{lmm:shrinking}}

\begin{lemma*}[Restatement of \Cref{lmm:shrinking}]
  For all $t\ge 1$, when the threshold is set to be $x=\beta-\eps$, the probability mass of estimated unsatisfied users shrink geometrically: 
  If $N\ge O\left(\frac{\log(mk/(\eps\delta))}{\eps^2}\right)$, then with probability at least $1-\delta$,
        \[
       \widehat{\mu}_t:= \Pr_{i\sim \cD}[i\in \hatU_t^{\beta-\eps}]\le (\beta+2\eps)^{t-1} \text{ for all } t\in [T].
        \]
\end{lemma*}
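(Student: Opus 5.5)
Our plan is to argue by induction on $t$, keeping the bound $\widehat{\mu}_t\le(\beta+2\eps)^{t-1}$ on a high-probability event. We will show that in each round the selected $S_t$ leaves at most a $(\beta+2\eps)$ fraction of the current estimated-unsatisfied population unsatisfied, i.e.
\[
\Pr_{i\sim\widehat{\cD}_t^{\beta-\eps}}\bigl[\hatrank(i;S_t,\Delta_t)\le\beta-\eps\bigr]\le\beta+2\eps .
\]
Since $\widehat{U}_{t+1}^{\beta-\eps}=\widehat{U}_t^{\beta-\eps}\cap\{i:\hatrank(i;S_t,\Delta_t)\le\beta-\eps\}$, this yields $\widehat{\mu}_{t+1}=\widehat{\mu}_t\cdot\Pr_{i\sim\widehat{\cD}_t^{\beta-\eps}}[\cdots]\le(\beta+2\eps)\widehat{\mu}_t$. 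The base case $\widehat{\mu}_1=1$ is trivial.

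\textbf{Step 1: an averaging argument over the lottery.} Fix any user $i$ and write $u_i(S)=\Rank(i;S,\Delta_t)=\Pr_{S'\sim\Delta_t}[S\succeq_i S']$. Let $S\sim\Delta_t$ independently of $S'$. The random variable $u_i(S)$ is the CDF-type rank of $S$ under $i$'s total preorder on committees. Its distribution therefore stochastically dominates a uniform random variable on $[0,1]$, so $\Pr_{S\sim\Delta_t}[u_i(S)\le\beta]\le\beta$. This is the same fact used by \citet{jiang2020approximately}. Ties only help, because $\succeq$ counts ties in favor of $S$. Averaging over $i\sim\widehat{\cD}_t^{\beta-\eps}$ and swapping the expectations gives
\[
\mathbb{E}_{S\sim\Delta_t}\Pr_{i}\bigl[\Rank(i;S,\Delta_t)\le\beta\bigr]\le\beta .
\]
Hence some $S^\star\in\support(\Delta_t)$ has true unsatisfied mass at most $\beta$.

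\textbf{Step 2: passing to estimated ranks and empirical selection.} On the success event of \Cref{lmm:union}, we have $|\hatrank-\Rank|\le\eps$ for all relevant users and all committees in the support. So $\hatrank(i;S,\Delta_t)\le\beta-\eps$ implies $\Rank(i;S,\Delta_t)\le\beta$. This gives $p(S):=\Pr_i[\hatrank(i;S,\Delta_t)\le\beta-\eps]\le\Pr_i[\Rank(i;S,\Delta_t)\le\beta]$, and in particular $p(S^\star)\le\beta$.

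The algorithm picks $S_t$ to minimize the empirical average $\hat p(S)$ over $N$ fresh samples from $\widehat{\cD}_t^{\beta-\eps}$. Apply Hoeffding to each $S\in\support(\Delta_t)$, whose size is $\poly(m,1/\eps)$, and union bound over these committees and the $T$ rounds. With $N=O(\log(mk/(\eps\delta))/\eps^2)$, all $|\hat p(S)-p(S)|\le\eps/2$ hold simultaneously with probability $1-\delta$. Then
\[
p(S_t)\le\hat p(S_t)+\eps/2\le\hat p(S^\star)+\eps/2\le p(S^\star)+\eps\le\beta+\eps\le\beta+2\eps .
\]
The spare $\eps$ is there to absorb slack in the sampling oracle.

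\textbf{Expected obstacle.} The delicate point is the dependence structure. The rank estimates used to define $\hat p$ must be fresh, or at least covered by the union bound of \Cref{lmm:union}. The samples from $\widehat{\cD}_t^{\beta-\eps}$ come from the rejection sampler, which must genuinely sample from the conditional distribution given the history. Its failure probability, with $M$ trials, must be folded into $\delta$; this becomes an issue once $\widehat{\mu}_t$ is small, but the geometric bound combined with $T=O(\log(k/\eps))$ keeps $M$ manageable. I would handle this by conditioning on the history $(S_\tau,\Delta_\tau)_{\tau<t}$, treating $\Delta_t$ as fixed before drawing the $N$ samples, and then union bounding over the rounds.
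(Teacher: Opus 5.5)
Your proposal is correct and follows essentially the same route as the paper. The paper also argues by induction using $\widehat{\mu}_{t+1}=\widehat{\mu}_t\cdot\Pr_{i\sim\widehat{\cD}_t^{\beta-\eps}}[\hatrank(i;S_t,\Delta_t)\le\beta-\eps]$, bounds each round through its auxiliary lemma on empirical selection from a lottery via the chain $\Uns(S_t)\le\hatUns(S_t)+\eps\le\hatUns(S^\star)+\eps\le\Uns(S^\star)+2\eps\le\beta+2\eps$, and takes a union bound over the support and the rounds. The only differences are cosmetic: you re-derive the existence of $S^\star$ (which the paper cites as Lemma 1 of \citet{jiang2020approximately}) by the averaging argument, and you use $\eps/2$-accurate concentration to get the slightly tighter $\beta+\eps$.
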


\begin{proof}[Proof of \cref{lmm:shrinking}]

     We start with proving an auxiliary lemma about selecting committees from stable lottery support based on estimated ranks, and then apply it to each round of \Cref{alg:committee} to prove \Cref{lmm:shrinking}.
     
     \begin{lemma}[Empirical selection from a stable lottery]
     \label{lmm:aux}
         For any user distribution $\cD$ and any lottery $\Delta$, let $\Uns_{\cD} = \Pr_{i\sim \cD}[\hatrank(i; S,\Delta)\le \beta-\eps]$ denote the unsatisfied probability under $\cD$.
     Let $\hatUns_{\cD}$ be an unbiased estimator of $\Uns_{\cD}$ from user samples. Specifically, given $N$ i.i.d. users $i_1,\ldots,i_N\sim\cD$,  define
    $\hatUns_{\cD}(S) =  \frac{1}{N} \sum_{j=1}^N \indicator{\hatrank(i_j; S,\Delta)\le \beta-\eps}$.
    Let $ \hat S \in \argmin_{S\in\support(\Delta)} \hatUns_{\cD}(S) $ denote the committee that minimizes the empirical unsatisfied probability. 

    If $N \ge O\left(\frac{\log(|\support(\Delta)|/\delta)}{\eps^2}\right)$ and $|\Rank-\hatrank|\le\eps$ always hold,
then with probability at least $1-\delta$,
\[
\Uns_{\cD}(\hat S)\le\beta+2\eps\,.
\]
     \end{lemma}
     
\begin{proof}[Proof of \Cref{lmm:aux}]
    By standard concentration argument and the union bound,
    when $N\ge O\left(\tfrac{\log(|\support(\Delta)/\delta)}{\eps^2}\right)$, we have that with probability at least $1-\delta$, 
\[
\bigl|\widehat{\Uns}_{\cD}(S)-\Uns_{\cD}(S)\bigr|\le \eps
\qquad
\text{uniformly for all } S\in\support(\Delta).
\]

According to \citep[Lemma 1]{jiang2020approximately}, we have the fact that for any $D,\Delta$ and threshold $\beta$, there always exists a committee $S\in\support(\Delta)$ such that
     $\Pr_{i\sim \cD}[\Rank(i; S,\Delta)\le \beta]\le\beta$. We denote such a committee with $S^\star$.
     
Now we combine the above facts to bound $\Uns_{\cD}(\hat{S})$. We have
\begin{align*}
    \Uns_{\cD}(\hat S)\le{}& \hatUns_{\cD}(\hat S) +\eps \le \hatUns_{\cD}(S^\star) +\eps \leq \Uns_{\cD}(S^\star) +2\eps=\Pr_{i\sim\cD}\left[\hatrank(i; S,\Delta)\le \beta-\eps\right] +2\eps\\
    \le{}& \Pr_{i\sim\cD}[\Rank(i; S,\Delta)\le \beta] +2\eps \le \beta +2\eps\,,
\end{align*}
     where the second last inequality holds since $\hatrank(i; S,\Delta)\le \beta-\eps$ implies $\Rank(i; S,\Delta)\le \beta$ due to the assumption $|\Rank-\hatrank|\le\eps$.
\end{proof}

Now we apply \Cref{lmm:aux} to prove \Cref{lmm:shrinking}.  
We do this by induction on $t\in[T]$.
    Consider a given round $t$ and 
    assume that we already have $\widehat{\mu}_{t-1}\le (\beta+2\eps)^{t-2}$. Similar to \Cref{lemma:guarantee-for-satisfied-users}, we condition on the success event of \Cref{lmm:union} for the computed $\Delta_t$, which holds because of our choice of $N \ge O\left(\frac{\log(mk/(\eps\delta))}{\eps^2}\right)$ and that $|\support(\Delta_t)|\le \poly(m,1/\eps)$ by \citet{jiang2020approximately}. 
    
    By applying \Cref{lmm:aux} with user distribution $\widehat{\cD}_t^{\beta-\eps}$ and the approximately lottery-committee pair $(\Delta_t,S_t)$ computed via empirical rank estimates, we have that with probability at least $1-\delta/T$, 
    \begin{align*}
        \Pr_{i\sim \widehat{\cD}_{t}^{\beta-\eps}}\left[i\in \hatU_{t+1}^{\beta-\eps}\right] = &{} \Pr_{i\sim \widehat{\cD}_{t}^{\beta-\eps}}\left[\hatrank(i; S_{t},\Delta_{t})\le \beta-\eps\right]\le \beta + 2\eps\,.
    \end{align*}
    Then we have
    \begin{align*}
        \widehat{\mu}_{t+1} = \Pr_{i\sim \cD}\left[i\in \hatU_{t+1}^{\beta-\eps}\right] = \Pr_{i\sim \cD|\hatU_{t}^{\beta-\eps}}\left[i\in \hatU_{t+1}^{\beta-\eps}\right]\cdot\Pr_{i\sim \cD}\left[i\in \hatU_{t}^{\beta-\eps} \right]\leq (\beta+2\eps) \cdot\widehat{\mu}_{t}\le(\beta+2\eps)^{t-1}\,.
    \end{align*}
    The proof is complete by taking a union bound over the failure probabilities in each round $t\in[T]$.
    \end{proof}
    
\subsection{Extension to the general case}
\label{app:extension}
Now consider $\cD$ that can have arbitrarily large support. We can sample $\kappa$ points from $\cD$ and let $\tilde \cD$ denote the empirical distribution over the $\kappa$ points. Then we run our proposed algorithm to select the stable committee for $\tilde \cD$.

\begin{lemma}
    When $\kappa \geq \frac{k^3\log(m/\delta)}{\eps^2}$, with probability at least $1-\delta$, any $\gamma$-approximately stable committee for $\tilde \cD$ is also $\gamma+\eps$-approximately stable committee for $\cD$.
\end{lemma}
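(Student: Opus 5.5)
The plan is a uniform convergence argument over the finitely many (committee, outside candidate) pairs, followed by a transfer of the stability inequality from $\tilde\cD$ to $\cD$.

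\textbf{Reduction to finitely many events.} Let $\tilde\cD$ be the empirical distribution of $\kappa$ i.i.d.\ draws from $\cD$. For a committee $W$ of size $k$ and a candidate $a\notin W$, let $E_{W,a}$ be the event $\{a\succ_i W\}$. Write $p(W,a)=\Pr_{i\sim\cD}[E_{W,a}]$ and $\tilde p(W,a)=\Pr_{i\sim\tilde\cD}[E_{W,a}]$. Each indicator $\indicator{E_{W,a}}$ is a bounded function of the user's ranking, so $\tilde p(W,a)$ is an average of $\kappa$ i.i.d.\ Bernoulli$(p(W,a))$ variables.

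\textbf{Concentration and union bound.} By Hoeffding's inequality,
\[
\Pr\bigl[|\tilde p(W,a)-p(W,a)|>\eta\bigr]\le 2e^{-2\kappa\eta^2}.
\]
The number of pairs $(W,a)$ is at most $\binom{m}{k}m\le m^{k+1}$, so the log-cardinality is $O(k\log m)$. Take $\eta=\eps/k$. A union bound then shows that, with probability at least $1-\delta$, we have $|\tilde p-p|\le \eps/k$ uniformly over all pairs, provided
\[
\kappa\gtrsim \frac{k^2\bigl(k\log m+\log(1/\delta)\bigr)}{\eps^2}.
\]
This is of order $\frac{k^3\log(m/\delta)}{\eps^2}$, matching the hypothesis up to the constant.

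\textbf{Transfer of stability.} On this good event, suppose $W$ is $\gamma$-approximately stable for $\tilde\cD$. Then for every $a\notin W$,
\[
p(W,a)\le \tilde p(W,a)+\frac{\eps}{k}\le\frac{\gamma}{k}+\frac{\eps}{k}=\frac{\gamma+\eps}{k}.
\]
Hence $W$ is $(\gamma+\eps)$-approximately stable for $\cD$. The key point is that the good event is established for all committees before the algorithm runs. The committee returned by the algorithm may depend on the sample, but this data dependence is harmless because the conclusion holds for every committee simultaneously.

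\textbf{Main obstacle.} The delicate part is choosing the additive precision. The guarantee is multiplicative in $1/k$, so the error must be $\eps/k$ rather than $\eps$, which costs a factor $k^2$. The union bound over all $\binom{m}{k}$ committees costs the remaining factor of $k$. If one prefers to avoid counting committees, a VC/shatter argument over the class of events $\{a\succ_i W\}$ gives the same $O(k\log m)$ complexity.

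A minor point: the stated bound $k^3\log(m/\delta)/\eps^2$ should be read with an implicit absolute constant. Alternatively, if $\log(1/\delta)$ dominates $k\log m$, the factor $k^3$ still suffices for the union bound since $k\ge1$.
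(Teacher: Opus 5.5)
Your proposal is correct and follows the paper's proof. The paper likewise asserts that, with probability at least $1-\delta$, $|\Pr_{i\sim\cD}[a\succ_i S]-\Pr_{i\sim\tilde\cD}[a\succ_i S]|\le \eps/k$ holds uniformly over all pairs $(a,S)$ with $|S|=k$, and then transfers the stability bound in one line exactly as you do; your Hoeffding-plus-union-bound count over at most $m^{k+1}$ pairs fills in the concentration step the paper leaves implicit and correctly recovers the $k^3\log(m/\delta)/\eps^2$ requirement.
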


\begin{proof}
    With probability at least $1-\delta$, for any pair of $(a,S)$ where $a\in \candidateset$ and $S\subseteq \candidateset$ with $|S|=k$,
    \[\left|\Pr_{i\sim\calD}[a \succ_i S]- \Pr_{i\sim\hat \calD}[a \succ_i S]\right|\le \frac{\eps}{k}\,.
\]
Thus, for any $\gamma$-approximately stable committee $W_k$ for $\hat \cD$, we have
    \[
\max_{a\notin W_k} \Pr_{i\sim \calD}[a \succ_i W_k] \le \max_{a\notin W_k} \Pr_{i\sim \hat \calD}[a \succ_i W_k] +\frac{\eps}{k} \le \frac{\gamma+\eps}{k}\,.
\]
\end{proof}
\begin{lemma}[No repetition of sampled users]
    Given $\kappa \geq n^2/\delta$ points, when sampling $n$ uniformly at random from these $\kappa$ points, with probability at least $1-\delta$, there is no repetition in $n$ samples.
\end{lemma}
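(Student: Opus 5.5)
This is a birthday-paradox bound. I will treat the $n$ draws as i.i.d.\ uniform picks from the $\kappa$ points and bound the probability of a collision directly with a union bound over pairs of draws.

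First, fix two draws $j<j'$ among the $n$. Since each draw is uniform over the $\kappa$ points and independent of the others, we have
\[
\Pr[\text{draw } j = \text{draw } j'] = \sum_{p=1}^{\kappa}\frac{1}{\kappa^2} = \frac{1}{\kappa}.
\]
Second, I take a union bound over all $\binom{n}{2}$ unordered pairs. This gives
\[
\Pr[\text{some repetition}] \le \binom{n}{2}\frac{1}{\kappa} \le \frac{n^2}{2\kappa} \le \frac{\delta}{2} \le \delta,
\]
using the assumption $\kappa\ge n^2/\delta$. So with probability at least $1-\delta$ all $n$ sampled users are distinct.

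There is no real obstacle here. The only point needing care is the interpretation: the lemma is meant to justify that drawing from the empirical distribution $\tilde\cD$ behaves like drawing fresh users from $\cD$. Each sampled user should be a distinct point of the $\kappa$ points, so that no user is queried across multiple oracle calls. That matters because it keeps the per-user comparison budget $d$ of \Cref{thm:committee} from being exceeded.

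I would add that the $n$ draws here are the total samples made by \Cref{alg:committee}, which is $\poly(m,1/\eps,\log 1/\delta)$ by part (i). Hence $\kappa=\max\{n^2/\delta,\ k^3\log(m/\delta)/\eps^2\}$ is still polynomial. Combining this lemma with the preceding one via a union bound costs only an extra $2\delta$ in total failure probability, which completes the reduction to the finite-support case.
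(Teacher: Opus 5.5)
Your proof is correct and matches the paper's own argument: a union bound over the $\binom{n}{2}$ pairs of draws, each colliding with probability $1/\kappa$, so the collision probability is at most $n^2/(2\kappa)\le\delta$. Your closing remarks on how the lemma feeds into the finite-support reduction are extra context and are not needed for the lemma itself.
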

\begin{proof}
We denote the $n$ draws as $X_1,\dots,X_n$, where each sample is drawn from the set of $\kappa$ points uniformly at random,
and taken independently with replacement. For $1\le i<j\le n$ define the event
\[
E_{ij}=\{X_i=X_j\}.
\]
For any fixed pair $(i,j)$ we have $\Pr(E_{ij})=1/\kappa$. By the union bound,
the probability that there exists at least one collision satisfies
\[
\Pr\bigg(\bigcup_{1\le i<j\le n} E_{ij}\bigg)
\le \sum_{1\le i<j\le n}\Pr(E_{ij})
= \binom{n}{2}\frac{1}{\kappa}
=\frac{n(n-1)}{2\kappa}\le\frac{n^2}{2\kappa}.
\]
Since $\kappa\ge n^2/\delta$,
we obtain
\[
\Pr[X_1,\ldots,X_n\text{ has repetitions}] \le \frac{n^2}{2\kappa} \le \frac{n^2}{2\cdot (n^2/\delta)}
 \le \delta.
\]
The proof is thus complete.
\end{proof}

\subsection{Additional oracles used in \Cref{alg:committee}}
\begin{algorithm}[h]
    \caption{Rejection Sampling Oracle for Unsatisfied Users}
    \label{alg:rejection-sampling}
    \begin{algorithmic}
      \STATE {\bfseries Input:} Sampling oracle $\cD$; history of previous rounds $(S_{\tau},\Delta_{\tau})_{\tau< t}$; failure probability $p$; rank estimation oracle $\hatrank$
      \STATE {\bfseries Output:} A random user $i\sim\widehat{\cD}_t^x$, or \textbf{failure}
      \STATE Number of trials $M\gets \lceil k \log(1/p) \rceil$
      \FOR{$\ell=1$ {\bfseries to} $M$}
      \STATE Sample a user $i\sim\cD$
      \IF{$\hatrank(i; S_{\tau},\Delta_{\tau})\le \beta-\eps$ for all $\tau< t$}
      \STATE \textbf{Return} $i$
      \ENDIF
      \ENDFOR
      \STATE \textbf{Return} \textbf{failure}
    \end{algorithmic}
  \end{algorithm}
  
  \begin{lemma}[Guarantee of \Cref{alg:rejection-sampling}]
  \label{lmm:rejection-sampling}
      Suppose $\cD$ is a user distribution, and $\hatrank$ is a rank estimation oracle with error $\eps$. Then, \Cref{alg:rejection-sampling} has failure probability at most $p$, and when it does not fail, the output user $i$ is from the conditional distribution $\cD|\widehat{U}_t^{x}$ where $\widehat{\cD}_t^{\beta-\eps}=\widehat{U}_t^{\beta-\eps}$ is the estimated unsatisfied event at round $t$. The number of sampled voters is $O(k\log(1/p))$.
  \end{lemma}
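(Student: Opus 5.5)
The statement to prove is \Cref{lmm:rejection-sampling}. The approach is a standard rejection-sampling argument with three parts: correctness of the output distribution, a bound on the failure probability, and a count of sampled users.

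\emph{Correctness.} Condition on \Cref{alg:rejection-sampling} not failing, and suppose it returns at trial $\ell$.
\begin{itemize}
\item Trials are i.i.d. draws $i\sim\cD$.
\item For a fixed history $(S_\tau,\Delta_\tau)_{\tau<t}$, the acceptance test $\hatrank(i;S_\tau,\Delta_\tau)\le\beta-\eps$ for all $\tau<t$ is exactly the event $i\in\hatU_t^{\beta-\eps}$.
\item The fresh committees drawn inside $\hatrank$ for a given user are independent of the other trials, so the accept/reject decisions are independent across trials.
\end{itemize}
It follows that the returned user has law $\cD$ conditioned on acceptance. Formally, for any measurable set $B$ of users,
\[
\Pr[\text{output}\in B]=\frac{\Pr_{i\sim\cD}[i\in B,\ i\in\hatU_t^{\beta-\eps}]}{\widehat{\mu}_t},
\]
which is $\widehat{\cD}_t^{\beta-\eps}=\cD\mid\hatU_t^{\beta-\eps}$. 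This holds whatever the index $\ell$ at which the algorithm stops, since the trials are memoryless. The randomness of $\hatrank$ can be folded into the user type by treating (user, internal draws) as the sampled object. Then $\hatU_t^{\beta-\eps}$ is a well-defined event and the conditional law is exactly the one the rest of the analysis uses.

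\emph{Failure probability.} This step needs a lower bound on the acceptance probability $\widehat{\mu}_t$. It is the main obstacle, because \Cref{lmm:shrinking} gives only an upper bound. I would get the lower bound from the observation that the algorithm only needs samples while unsatisfied users still carry meaningful mass. If $\widehat{\mu}_t$ is tiny, say $\widehat{\mu}_t\le 1/k$ up to constants, then this round contributes at most $O(1/k)$ to the final bound (step (c) of the sketch). In that case a failure can be treated as "stop and add arbitrary remaining candidates" without hurting the $O(1/k)$ guarantee.

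So I would argue in two cases.
\begin{itemize}
\item If $\widehat{\mu}_t\ge 1/k$, the chance that all $M=\lceil k\log(1/p)\rceil$ trials reject is
\[
(1-\widehat{\mu}_t)^M\le e^{-M/k}\le p.
\]
\item If $\widehat{\mu}_t<1/k$, a failure (or any output) affects stability by at most $\widehat{\mu}_t$. This is absorbed into the $O(\eps)$ and constant slack of \Cref{thm:committee}, at a cost of a constant-factor increase in $M$.
\end{itemize}
This interpretation of "failure probability at most $p$" is the claim for the regime $\widehat{\mu}_t\ge 1/k$, which is the regime in which the oracle is actually invoked.

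\emph{Sample count.} The loop samples at most $M=O(k\log(1/p))$ users, which is immediate. Each sampled user answers at most $t-1\le T$ rank-estimation calls. This matches the per-user query count already tallied in the proof of \Cref{thm:committee}.
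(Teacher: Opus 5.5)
Your proposal is correct and follows essentially the same route as the paper: the standard rejection-sampling argument for the output law, the bound $(1-\widehat{\mu}_t)^M\le e^{-\widehat{\mu}_t M}$ for failure, and an early-stopping fix for the regime $\widehat{\mu}_t=O(1/k)$, which is needed because \Cref{lmm:shrinking} only upper-bounds $\widehat{\mu}_t$. You are somewhat more explicit than the paper, since you fold the internal randomness of $\hatrank$ into the user type and use a sampler failure itself as the stopping signal; note, though, that leaving unsatisfied mass up to $1/k$ adds $+1$ to the stability ratio rather than $O(\eps)$ unless the stopping threshold is $\eps/k$, which costs a $1/\eps$ factor in $M$ --- a looseness the paper's own sketch shares.
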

  \begin{proof}[Proof of \Cref{lmm:rejection-sampling}]
      The fact that output users follow $\widehat{\cD}_t^x$ is from standard fact of rejection sampling. Here, we bound the failure probability here.

      At each trial, the probability of sampling a user that satisfies
      \[
      \hatrank(i; S_{\tau},\Delta_{\tau})\le x,\ \forall \tau\le t
      \quad\iff\quad
      i\in\widehat{U}_t^x
      \]
      is exactly $\widehat{\mu}_t=\Pr_{i\sim\cD}[i\in\widehat{U}_t^x]$. Therefore, 
      the failure probability after $M$ trials is at most $(1-\widehat{\mu}_t)^{M}\le e^{-\widehat{\mu}_t M}$.
      it suffices to show that $\widehat{\mu}_t\ge\Omega(1/k)$. Note that \Cref{lmm:aux} provides an upper bound that $\widehat{\mu}_t\le(\beta+2\eps)^{t-1}$, but the actual decrease rate might be faster. To deal with this, it suffices to add a stopping condition at \Cref{alg:committee} for early stopping as long as $\widehat{\mu}_t\le O(1/k)$.
  \end{proof}
  
  \begin{algorithm}[h]
    \caption{Rank Estimation Oracle}
    \label{alg:rank-estimation}
    \begin{algorithmic}
      \STATE {\bfseries Input:} Committee $S$, lottery $\Delta$, user $i$, failure probability $\delta$, error $\eps$.
      \STATE {\bfseries Output:} An estimate of $\Rank(i; S,\Delta)=\Pr_{S'\sim\Delta}[S\succeq_i S']$.
      \STATE Number of trials $L\gets \lceil \frac{1}{\eps^2} \log(1/\delta) \rceil$
      \STATE Sample committees $S_1,\ldots,S_L\simiid\Delta$
      \STATE \textbf{Return} $\widehat{\Rank}(i; S,\Delta)\triangleq\frac{1}{L}\sum_{\ell=1}^L \indicator{S\succeq_i S_\ell}$
    \end{algorithmic}
  \end{algorithm}

  \begin{fact}[Guarantee of \Cref{alg:rank-estimation}]
\label{lmm:union}
     Suppose the user distribution $\cD$ has support size at most $\kappa$, then for a fixed lottery $\Delta$, \Cref{alg:rank-estimation} with failure probability $\frac{\delta}{\kappa\cdot|\support(\Delta)|}$ guarantees that with probability at least $1-\delta$, for any user $i\in \support(\cD)$, and committee $S\in \support(\Delta)$, 
    \[\left|\hatrank(i;S,\Delta)-\Rank(i;S,\Delta)\right| \leq \eps\,.\]
    The number of pairwise comparisons each user makes is $O\left(|S|\cdot\frac{\log(\kappa\cdot|\support(\Delta)|/\delta)}{\eps^2}\right)$.
\end{fact}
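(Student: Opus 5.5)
The statement to prove is the last one above, \Cref{lmm:union}: the guarantee of \Cref{alg:rank-estimation}. The plan is Hoeffding's inequality for a single pair $(i,S)$, then a union bound over the finitely many pairs, then a count of the comparisons needed.

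\emph{Fixing the randomness.} Fix the lottery $\Delta$, a user $i\in\support(\cD)$ and a committee $S\in\support(\Delta)$. The oracle draws $S_1,\ldots,S_L\simiid\Delta$, independently of $i$'s fixed preference order $\sigma_i$. Hence the indicators $Z_\ell:=\indicator{S\succeq_i S_\ell}$ are i.i.d. Bernoulli variables with mean
\[
\Pr_{S'\sim\Delta}[S\succeq_i S']=\Rank(i;S,\Delta).
\]
Since $\hatrank(i;S,\Delta)=\frac1L\sum_\ell Z_\ell$, Hoeffding's inequality gives
\[
\Pr\bigl[|\hatrank(i;S,\Delta)-\Rank(i;S,\Delta)|>\eps\bigr]\le 2e^{-2L\eps^2}.
\]
The oracle is run with failure parameter $\delta':=\delta/(\kappa\cdot|\support(\Delta)|)$, so $L=\lceil \eps^{-2}\log(1/\delta')\rceil$. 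This makes the bound at most $2\delta'^2\le\delta'$ whenever $\delta'\le 1/2$. If we want to avoid that case distinction, we can simply absorb the constant factor into $L$.

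\emph{Union bound.} There are at most $\kappa\cdot|\support(\Delta)|$ pairs $(i,S)$. A union bound therefore gives total failure probability at most $\delta$.

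The one subtle point is whether the sampled committees are shared across pairs. If the same $S_1,\ldots,S_L$ are reused for all pairs, the per-pair events are dependent. The union bound does not need independence, so this is harmless.

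\emph{Comparison count.} Each indicator $\indicator{S\succeq_i S_\ell}$ asks whether $i$'s favorite element of $S$ is weakly preferred to $i$'s favorite element of $S_\ell$. To evaluate it, find $i$'s favorite in $S$ by a sequential maximum, which costs $|S|-1$ pairwise comparisons. This favorite can be reused across all $\ell$. Then find $i$'s favorite in $S_\ell$ by a sequential maximum, costing $|S_\ell|-1$ comparisons, and compare the two winners with one more comparison. No comparison is needed if the two winners are equal.

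Committees in $\support(\Delta)$ all have size $|S|$, so the total is $O(|S|\cdot L)=O\bigl(|S|\cdot\eps^{-2}\log(\kappa|\support(\Delta)|/\delta)\bigr)$, as claimed.

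\emph{Main obstacle.} The only real issue is making sure the mean of $Z_\ell$ is exactly $\Rank$. This holds because $\sigma_i$ is deterministic and the $S_\ell$ are drawn independently of the user. Everything after that is routine.
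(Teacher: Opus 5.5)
Your proposal is correct and follows the argument the paper intends: Hoeffding's inequality for each pair $(i,S)$, a union bound over the at most $\kappa\cdot|\support(\Delta)|$ pairs, and an implementation of each indicator with roughly $|S|+|S'|$ pairwise comparisons. The paper itself gives no separate proof; it only cites ``standard concentration arguments'' for $\widetilde{O}(1/\eps^2)$ draws, and your remarks on the $\delta'\le 1/2$ edge case and on shared draws being harmless for the union bound correctly fill in details it leaves implicit.
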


  \section{Proof of \Cref{thm:ranking}}
  \label{app:ranking}
  \thmranking*
  \begin{proof}
    Based on the choice of the sizes of checkpoint committees, the size of $\checkpoint_{\le r}=\bigcup_{s=1}^r \checkpoint_s$ satisfies
    \begin{align*}
     |   \checkpoint_{\le r}|
    =\left|\bigcup_{s=1}^r \checkpoint_s\right|
    \le\sum_{s=1}^r |\checkpoint_s|
    \le\sum_{s=1}^r \lambda^{s-1}=\frac{\lambda^r-1}{\lambda-1}
    \end{align*}
    Therefore, for any prefix $W_k^{\pi}$ of the output ranking $\pi$, as long as \begin{align*}
        k\ge\frac{\lambda^r-1}{\lambda-1},
        \label{eq:inclusion-sufficient}\tag{Sufficient Condition}
    \end{align*}
    then we must have 
    \[
    W_k^{\pi}\supseteq \checkpoint_{\le r}\supseteq \checkpoint_r.
    \]
    It now remains to find the largest $r=r(k)$ for  \eqref{eq:inclusion-sufficient} to hold. For such an $r(k)$, \eqref{eq:inclusion-sufficient} would fail for $r=r(k)+1$, i.e.,
    \begin{align*}
        k<\frac{\lambda^{r(k)+1}-1}{\lambda-1}.
    \end{align*}
    If $\lambda$ is an integer, then we have \[
    \left|\checkpoint_{r(k)}\right|=
    \lambda^{r(k)-1}
        \ge\frac{k(\lambda-1)}{\lambda^2}\ge
        \frac{k(\lambda-1)}{\lambda^2},
    \]
    which implies that $W_k^{\pi}$, which is a superset of the $\gamma$-approximately local stable committee $\checkpoint_{r(k)}$, also satisfies local stability with approximation factor $\frac{\lambda^2}{\lambda-1}\cdot \gamma$.
    
    When $\lambda\searrow1$, we have
    \[
    |\checkpoint_{r(k)}|=
    \left\lfloor\lambda^{r(k)-1}\right\rfloor
        \ge\left\lfloor\frac{k(\lambda-1)}{\lambda^2}\right\rfloor\ge
        \frac{k}{O(\tfrac{1}{\lambda-1})},
    \]
    which proves the second part of the theorem that $W_k^{\pi}$ is $O(\tfrac{1}{\lambda-1})\cdot\gamma$-approximately local stable.

\end{proof}

\section{Additional Experiment Details}

We give additional details about the implementation of our algorithms for experiments, as well as detailed sampling procedures. All experiment code is included with the submission.

\subsection{Algorithm implementation details and default parameters}

For our experimental implementation of Algorithm 1, we describe several subroutines in detail along with their default parameters used throughout the experiments. These same parameters carry forward to Algorithms 2 and 3.

\paragraph{Computing an approximately stable lottery.}
We describe our implementation of the following line from Algorithm \ref{alg:committee}:
\begin{quote}
    $\Delta_t\gets (1+\eps)$-approximately stable lottery for $\widehat{\cD}_t^{\beta-\eps}$ of size $k_t$ (see \citet{cheng2020group})
\end{quote}
To implement this, we run the algorithm described by \citet{cheng2020group} in Section 2.5 using their proposed multiplicative weight update (MWU) procedure. This includes two parameters: the number of iterations $T_{\text{MWU}}$ to run the multiplicative weight update procedure, and the number of iterations $T_{\textsc{Oracle}}$ to run their \textsc{Oracle}($\Delta_a$, $\epsilon$) subroutine. By default in experiments, we run $T_{\text{MWU}}=20$ iterations of the multiple weight update procedure, and $T_{\textsc{Oracle}}=30$ iterations to compute \textsc{Oracle}($\Delta_a$, $\epsilon$). For each iteration of the multiplicative weight update procedure, we sample $n_{\text{MWU}}=50$ users.

\paragraph{Choosing a committee.}
We describe our implementation of the following line from Algorithm \ref{alg:committee}:

\begin{quote}
    $S_t\gets$ committee in $\support(\Delta_t)$ that minimizes estimated unsatisfied probability
\end{quote}

In experiments, we sample $n_{\text{eval}}=100$ users, and estimate the unsatisfied probability for each committee $S$ in $\support(\Delta_t)$.

\subsection{Additional details for experiments on committee stability with mixtures of Mallows (Section \ref{sec:experiments_mom})}

To produce the plots in Figure \ref{fig:mom}, we run the following full procedure 10 times for each $k$:

\begin{enumerate}
    \item Sample $k$ permutations uniformly at random and set these as centers $\pi_1,...,\pi_k$. Set mixture weights to $\frac{1}{k}$, and set all dispersions to the same given $\phi$.
    \item Run Algorithm 1 with default parameters to produce a committee of size $k$. Figure \ref{fig:sample_complexity_mom} gives a breakdown of the samples drawn for an example run, which sampled $18,540$ users at a maximum of $98$ pairwise comparisons per user, and a median of $38$ pairwise comparisons per user.
    \item Compute the Bradley-Terry ranking over a sample of $10,000$ users and all pairwise comparisons for each user, and produce a committee of size $k$ consisting of the top-$k$ ranked candidates.
    \item Produce a committee consisting of $\{\pi_1(1),...,\pi_k(1)\}$ (labeled ``Mallows centers'' in Figure \ref{fig:mom}).
    \item Draw a sample of $n=10,000$ users, and compute $\hat{\gamma}$ for all committees.
\end{enumerate}

\begin{figure}
    \centering
    \begin{tabular}{c}
    Histogram of pairwise comparisons for Algorithm 1 \\
    \includegraphics[width=0.5\linewidth]{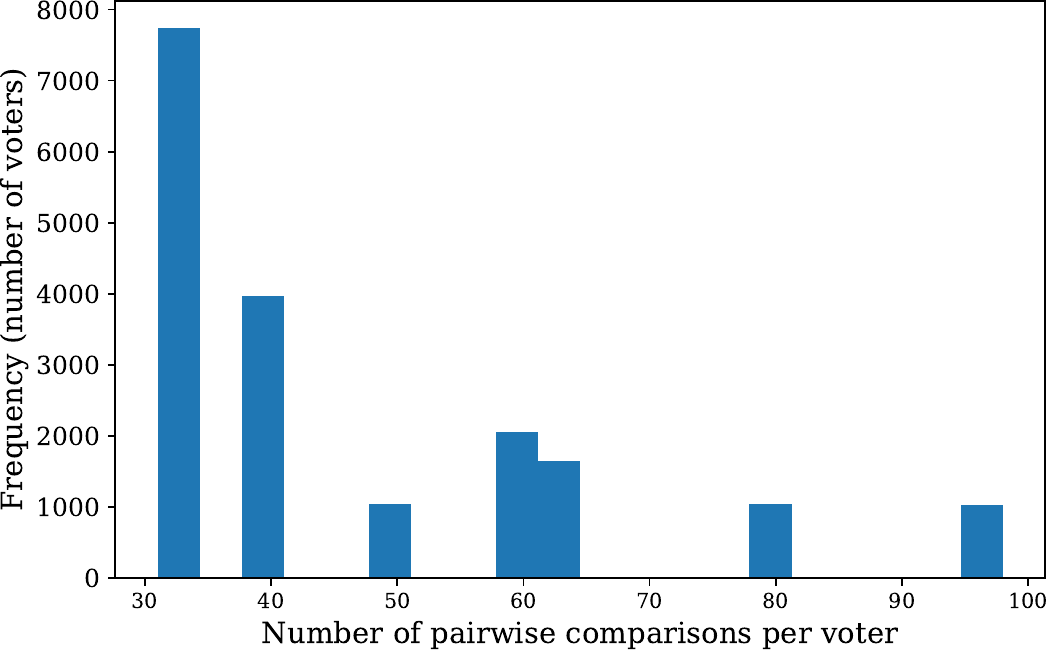}
    \end{tabular}
    \caption{Histogram of pairwise comparisons per user for a single run of Algorithm 1 for committee size $k=5$ and $\phi=0.5$. The total number of users sampled was 18,540. The maximum number of pairwise comparisons for a single user was $98$, and the median was $38$.}
    \label{fig:sample_complexity_mom}
\end{figure}

\subsection{Additional details for experiments on ranking stability with LMArena simulation (Section \ref{sec:experiments_lmarena})}
To produce the plots in Figure \ref{fig:lmarena_top-k}, we run the following procedure $5$ times: 

\begin{enumerate}
    \item Run Algorithm 2 with default parameters over the data distribution to produce a ranking. Figure \ref{fig:sample_complexity_lmarena_alg2} gives a breakdown of the samples drawn for an example run, which sampled $ 39,080$ users at a maximum of $95$ pairwise comparisons per user, and a median of $24$ pairwise comparisons per user
    
    \item Run Algorithm 3 with default parameters over the data distribution to produce a ranking. Figure \ref{fig:sample_complexity_lmarena_alg3} gives a breakdown of the samples drawn for an example run, which sampled $ 74,150$ users at a maximum of $45$ pairwise comparisons per user, and a median of $5$ pairwise comparisons per user.

    \item Compute the Bradley-Terry ranking over a sample of $100,000$ users and all pairwise comparisons for each user. This is equivalent to ranking by Borda count over all sampled users, and is also equivalent to ranking by Elo-score over all possible pairwise comparisons from all sampled users. With the sample size of $100,000$ users, there was no variation in the Bradley-Terry ranking between repeats.

    \item Compute the stability approximation factors $\hat{\gamma}$ for all top-$k$ prefix committees of all rankings using a draw of $n=100,000$ users. 

\end{enumerate}

\begin{figure}
    \centering
    \begin{tabular}{c}
    Histogram of pairwise comparisons for Algorithm 2 \\
    \includegraphics[width=0.5\linewidth]{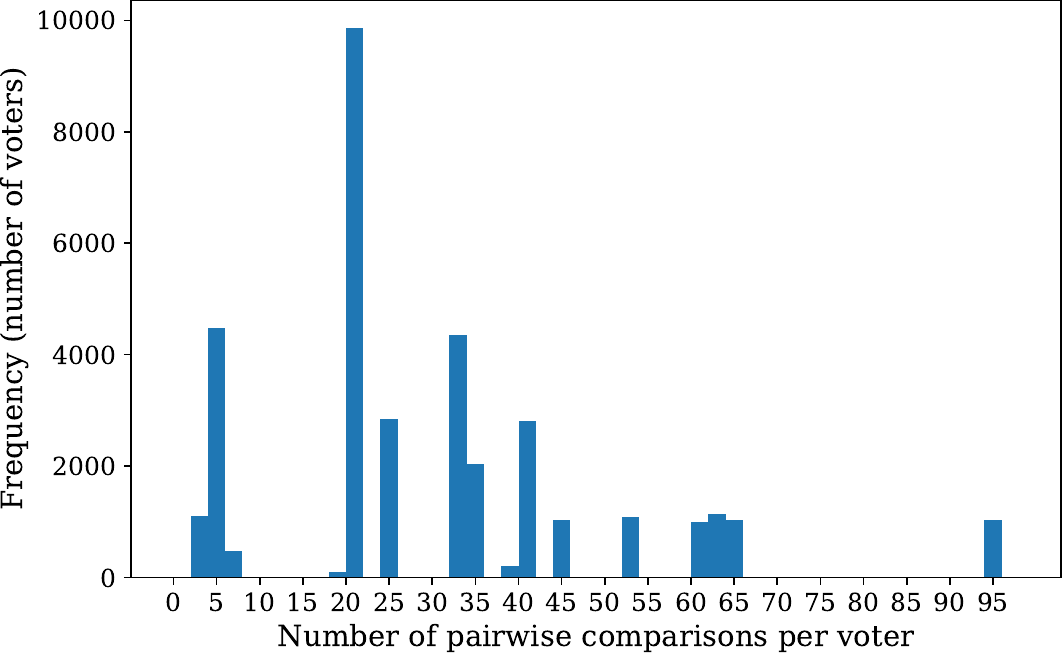}
    \end{tabular}
    \caption{Histogram of pairwise comparisons per user for a single run of Algorithm 3 on LMArena simulation with $\phi=0.5$. The total number of users sampled was  39,080. The maximum number of pairwise comparisons for a single user was $95$, and the median was $24$.}
    \label{fig:sample_complexity_lmarena_alg2}
\end{figure}

\begin{figure}
    \centering
    \begin{tabular}{c}
    Histogram of pairwise comparisons for Algorithm 3 \\
    \includegraphics[width=0.5\linewidth]{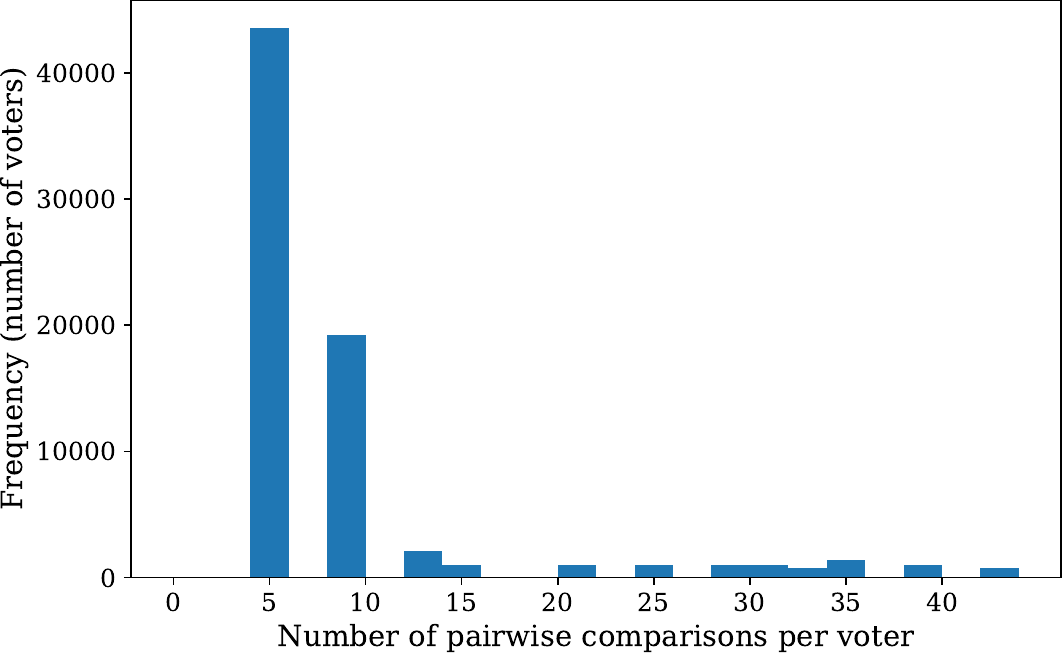}
    \end{tabular}
    \caption{Histogram of pairwise comparisons per user for a single run of Algorithm 3 on LMArena simulation with $\phi=0.5$. The total number of users sampled was 74,150. The maximum number of pairwise comparisons for a single user was $45$, and the median was $5$.}
    \label{fig:sample_complexity_lmarena_alg3}
\end{figure}

\subsection{Specific model rankings on LMArena simulation}\label{app:lmarena_rankings} Here we give more details the specific Bradley-Terry rankings per category from the LMArena data that form the Mallows central rankings. Table \ref{tab:lmarena_centers} shows the top two ranked LLMs per category, as well as the estimated mixture weight per category. To give a sense of the exact rankings computed in Figure \ref{fig:lmarena_top-k}, Table \ref{tab:lmarena_ranking_results} shows the rankings produced by a single run of each ranking method. All full rankings can be found in the attached experiment code. 

\begin{table}[!ht]
    \centering
    \footnotesize
    \begin{tabular}{c|c|l|l}
    Category index & Mixture weight & Top-ranked LLM,  $\pi(1)$ & Second ranked LLM,  $\pi(2)$ \\
    \midrule
0 & 0.17 &gemini-2.5-pro-preview-03-25& gemini-2.5-pro\\
1 & 0.13 &gemini-2.5-pro-preview-03-25& chatgpt-4o-latest-20250326\\
2 & 0.13 &chatgpt-4o-latest-20250326& kimi-k2-0711-preview\\
3 & 0.07 &gemini-2.5-pro& gemini-2.5-pro-preview-03-25\\
4 & 0.04 &gemini-2.5-pro& grok-3-preview-02-24\\
5 & 0.04 &chatgpt-4o-latest-20250326& o3-2025-04-16\\
6 & 0.04 &deepseek-r1-0528& llama-4-maverick-03-26-experimental\\
7 & 0.04 &gemini-2.5-pro& deepseek-r1-0528\\
8 & 0.03 &gemini-2.5-pro& gemini-2.5-flash\\
9 & 0.03 &gemini-2.5-pro& gemini-2.5-pro-preview-03-25\\
10 & 0.03 &llama-4-maverick-03-26-experimental& grok-4-0709\\
11 & 0.03 &gemini-2.5-pro& gemini-2.5-pro-preview-05-06\\
12 & 0.03 &chatgpt-4o-latest-20250326& llama-4-maverick-03-26-experimental\\
13 & 0.03 &gemini-2.5-pro& gemini-2.5-pro-preview-05-06\\
14 & 0.03 &gemini-2.5-pro& o4-mini-2025-04-16\\
15 & 0.03 &o3-2025-04-16& llama-4-maverick-03-26-experimental\\
16 & 0.02 &gemini-2.5-pro& llama-4-maverick-03-26-experimental\\
17 & 0.02 &gemini-2.5-pro& grok-3-preview-02-24\\
18 & 0.02 &o3-2025-04-16& gemini-2.5-pro\\
19 & 0.02 &gemini-2.5-flash& gemini-2.5-pro\\
    \end{tabular}
    \caption{Mixture weights and top-ranked LLMs for each central ranking corresponding to a prompt category in the LMArena simulation. Each central ranking is determined by the Bradley-Terry ranking over the pairwise comparisons within a given \texttt{category\_tag} from the LMArena dataset.}
    \label{tab:lmarena_centers}
\end{table}

\begin{table}[!ht]
    \centering
    \footnotesize
    \begin{tabular}{c|p{0.7\linewidth}}
    \toprule
    \multicolumn{2}{c}{$\phi=0.1$} \\
    \midrule
        Bradley-Terry ranking &  gemini-2.5-pro, chatgpt-4o-latest-20250326, grok-3-preview-02-24, o3-2025-04-16, deepseek-r1-0528, gemini-2.5-pro-preview-03-25, llama-4-maverick-03-26-experimental, gemini-2.5-flash, ...\\
        \hline
        Algorithm 2 ranking & gemini-2.5-pro, chatgpt-4o-latest-20250326, gemini-2.5-pro-preview-03-25, o3-2025-04-16, llama-4-maverick-03-26-experimental, o4-mini-2025-04-16, kimi-k2-0711-preview, gpt-4.1-2025-04-14, ...\\
        \hline
        Algorithm 3 ranking & gemini-2.5-pro, gemini-2.5-pro-preview-03-25, chatgpt-4o-latest-20250326, deepseek-r1-0528, grok-4-0709, gemini-2.5-pro-preview-05-06, o3-2025-04-16, grok-3-preview-02-24, ... \\
    \bottomrule
    \multicolumn{2}{c}{$\phi=0.5$} \\
    \midrule
        Bradley-Terry ranking &  gemini-2.5-pro, chatgpt-4o-latest-20250326, grok-3-preview-02-24, o3-2025-04-16, deepseek-r1-0528, gemini-2.5-pro-preview-03-25, llama-4-maverick-03-26-experimental, gemini-2.5-flash, ... \\
        \hline
        Algorithm 2 ranking & gemini-2.5-pro, chatgpt-4o-latest-20250326, gemini-2.5-flash, llama-4-maverick-03-26-experimental, gemini-2.5-pro-preview-03-25, qwen3-235b-a22b, o4-mini-2025-04-16, deepseek-r1-0528, ...\\
        \hline
        Algorithm 3 ranking & gemini-2.5-pro, chatgpt-4o-latest-20250326, deepseek-r1-0528, grok-3-preview-02-24, gemini-2.5-pro-preview-03-25, grok-4-0709, gemini-2.5-pro-preview-05-06, o3-2025-04-16, ...\\
    \bottomrule
    \multicolumn{2}{c}{$\phi=0.9$} \\
    \midrule
        Bradley-Terry ranking &  gemini-2.5-pro, chatgpt-4o-latest-20250326, grok-3-preview-02-24, o3-2025-04-16, deepseek-r1-0528, gemini-2.5-pro-preview-03-25, llama-4-maverick-03-26-experimental, gemini-2.5-flash, ...\\
        \hline
        Algorithm 2 ranking & gemini-2.5-pro, grok-3-preview-02-24, gemini-2.5-flash, llama-4-maverick-03-26-experimental, gemini-2.5-pro-preview-03-25, o4-mini-2025-04-16, chatgpt-4o-latest-20250326, qwen3-235b-a22b-no-thinking, ...\\
        \hline
        Algorithm 3 ranking & gemini-2.5-pro, chatgpt-4o-latest-20250326, gemini-2.5-pro-preview-03-25, o3-2025-04-16, deepseek-r1-0528, grok-3-preview-02-24, llama-4-maverick-03-26-experimental, grok-4-0709, ...\\
    \bottomrule
    \end{tabular}
    \caption{Example rankings from a single run of each ranking in Figure \ref{fig:lmarena_top-k}. For ease of visualization, we show the top 8 ranked candidates per ranking here. This does not account for variation across repeats, but gives a single ranking sample as an illustration of the tendencies of each ranking method.}
    \label{tab:lmarena_ranking_results}
\end{table}

\end{document}